
\documentclass[11pt]{article}
\usepackage{fullpage}
\usepackage{times}
\usepackage{microtype}
\usepackage{graphicx}
\usepackage{booktabs} 

\usepackage{algorithm}
\usepackage{algorithmic}
\usepackage{amsmath}
\usepackage{amssymb}
\usepackage{nicefrac}
\usepackage{amsthm}
\usepackage{hyperref}
\usepackage{enumerate}
\usepackage{appendix}

\usepackage{caption,wrapfig,bm}
\usepackage[usenames,dvipsnames,svgnames,table]{xcolor}
\usepackage{thm-restate}
\usepackage{listings}
\makeatletter
\newcommand{\setword}[2]{%
	\phantomsection
	#1\def\@currentlabel{\unexpanded{#1}}\label{#2}%
}
\makeatother

\DeclareMathOperator{\maxdet}{MAXDET}
\DeclareMathOperator{\maxvol}{MAXVOL}

\def\compactify{\itemsep=0pt \topsep=0pt \partopsep=0pt \parsep=0pt}
\let\latexusecounter=\usecounter

{\begin{itemize}%
\setlength{\itemsep}{0pt}%
\setlength{\topsep}{0pt}%
\setlength{\partopsep}{0pt}%
\setlength{\parskip}{0pt}}%
{\end{itemize}}


\hypersetup{%
  breaklinks,%
  ocgcolorlinks, %
  colorlinks=true,%
  urlcolor=[rgb]{0.45,0.0,0.0},%
  linkcolor=[rgb]{0.0,0.45,0.0},%
  citecolor=[rgb]{0,0,0.45}%
}%




\begin{document}

\title{Composable Core-sets for Determinant Maximization:  \\
            A Simple Near-Optimal Algorithm \footnote{This is an equal contribution paper. This paper has appeared in the 36th International Conference on Machine Learning (ICML), 2019}}


\author{ \begin{tabular}{ccccc}
{\begin{tabular}{c} Piotr Indyk \\ MIT \\\small{indyk@mit.edu} \end{tabular}} & & & &
{\begin{tabular}{c}Sepideh Mahabadi \\TTIC \\\small{mahabadi@ttic.edu} \end{tabular}}  \tabularnewline \\
{\begin{tabular}{c}Shayan Oveis Gharan \\University of Washington \\\small{shayan@cs.washington.edu } \end{tabular}} & & & &
 {\begin{tabular}{c}Alireza Rezaei \\University of Washington \\\small{arezaei@cs.washington.edu} \end{tabular}} 
\end{tabular}
}
\date{}

\newcommand{\tab}{\hspace{5mm}}

\newcommand{\eps}{\epsilon}
\newcommand{\sgn}{\operatorname{\text{{\rm sgn}}}}
\newcommand{\poly}{\operatorname{\text{{\rm poly}}}}
\newcommand{\floor}[1]{\lfloor #1 \rfloor}
\newcommand{\ceil}[1]{\lceil #1 \rceil}
\newcommand{\Fq}{\mathbb{F}_q}
\newcommand{\prob}[1]{P(#1)}
\newcommand{\Prob}[1]{P\left(#1\right)}
\newcommand{\Cspan}{\alpha}

\newtheorem{theorem}{Theorem}[section]
\newtheorem{remark}[theorem]{Remark}
\newtheorem{claim}[theorem]{Claim}
\newtheorem{lemma}[theorem]{Lemma}
\newtheorem{corollary}[theorem]{Corollary}
\newtheorem{conclusion}[theorem]{Conclusion}
\newtheorem{proposition}[theorem]{Proposition}
\newtheorem{conjecture}[theorem]{Conjecture}
\newtheorem{definition}[theorem]{Definition}
\newtheorem{observation}[theorem]{Observation}
\newtheorem{fact}[theorem]{Fact}
\newenvironment{proofof}[1]{{\vspace{3mm}\em Proof of #1.}}{\hfill
	\qed}

\def\LP{LP-based}
\def\U{{\cal U}}
\def\Re{{\mathbb{R}}}
\def\sphere{{\mathbb{S}}}
\def\Pr{{\mathbb{P}}}
\def\pntset{{P}}
\def\cs{{\mathcal{C}}}
\def\h{{\mathcal{H}}}
\def\ha{{\mathcal{G}}}
\def\sI{{\mathcal{I}}}
\def\Set#1{\{#1\}}
\def\abs#1{\left|#1\right|}
\def\norm#1{||#1||}
\def\dist{\mathrm{dist}}
\def\explain#1{\mbox{#1}}
\def\sS{{\mathcal{S}}}
\def\sC{{\mathcal{C}}}
\def\vol{{\mathrm{VOL}}}
\def\Vol{{\mathrm{VOL}}}
\def\alg{{\mathrm{ALG}}}

\maketitle

\begin{abstract}
``Composable core-sets'' are  an efficient framework for solving optimization problems in massive data models.
In this work, we consider efficient construction of composable core-sets for the \emph{determinant maximization} problem.
This can also be cast as the MAP inference task for \emph{determinantal point processes}, that have recently gained a lot of interest for modeling diversity and fairness.
The problem was recently studied in \cite{indyk2018composable}, where they designed composable core-sets with the optimal approximation bound of $\tilde O(k)^k$. On the other hand, the more practical \emph{Greedy} algorithm has been previously used in similar contexts. In this work, first we provide a theoretical approximation guarantee of $O(C^{k^2})$ for the Greedy algorithm in the context of composable core-sets; Further, we propose to use a \emph{Local Search} based algorithm that while being still practical, achieves a nearly optimal approximation bound of $O(k)^{2k}$;
Finally, we implement all three algorithms and show the effectiveness of our proposed algorithm on standard data sets.
\end{abstract}

\section{Introduction}\label{sec:intro}
Given a set of vectors $P\subset \Re^d$ and an integer $1\leq k\leq d$, the goal of the determinant maximization problem is to find a subset $S=\{v_1,\dots,v_k\}$ of $P$ such that the determinant of the Gram matrix of the vectors in $S$ is maximized. Geometrically,  this determinant is equal to the volume squared of the parallelepiped spanned by the points in $S$. 
This problem and its variants have been studied extensively over the last decade. To this date, the best  approximation factor is due to a work of Nikolov \cite{nikolov2015randomized} which gives a factor of $e^k$, and it is known that the exponential dependence on $k$ is unavoidable~\cite{civril2013exponential}.

The determinant of a subset of points is used as a measure of diversity in many applications where a small but diverse subset of objects must be selected as a representative of a larger population \cite{mirzasoleiman2017streaming,gong2014diverse, kulesza2012determinantal, chao2015large, kulesza2011learning, yao2016tweet, lee2016individualness}; recently, this has been further applied to model fairness~\cite{celis2016fair}.
The determinant maximization problem can also be rephrased as the maximum a posteriori probability (MAP) 
estimator for {\em determinantal point processes} (DPPs). DPPs are probabilistic models of diversity in which every subset of $k$ objects is assigned a probability proportional to the determinant of its corresponding Gram matrix. DPPs have found several applications in machine learning over the last few years \cite{kulesza2012determinantal,mirzasoleiman2017streaming,gong2014diverse,yao2016tweet}. 
In this context, the determinant maximization problem corresponds to the task of finding the most diverse subset of items.

Many of these applications need to handle large amounts of data and consequently the problem has been considered in massive data models of computation \cite{mirzasoleiman2017streaming,wei2014fast,pan2014parallel, mirzasoleiman2013distributed,mirzasoleiman2015distributed,mirrokni2015randomized,barbosa2015power}. One strong such model that we consider in this work, is \emph{composable core-set} \cite{IMMM-ccdcm-14} which is an efficient summary of a data set with the composability property: union of summaries of multiple data sets should provably result in a good summary for the union of the data sets. More precisely, in the context of the determinant maximization, a mapping function $c$ that maps any point set to one of its subsets is called an $\alpha$-composable core-set if it satisfies the following condition: given any integer $m$ and any collection of point sets $P_1,\cdots,P_m \subset \mathbb{R}^d$, 
$$
\text{MAXDET}_k (\bigcup_{i=1}^m c(P_i)) \geq \frac{1}{\alpha}\cdot \text{MAXDET}_k(\bigcup_{i=1}^mP_i)
$$
where we use $\text{MAXDET}_k$ to denote the optimum of the determinant maximization problem for parameter $k$. We also say $c$ is a core-set of size $t$ if for any $P \subset \Re^d$, $|c(P)| \leq t$.
 If designed for a task, composable core-sets will further imply efficient streaming and distributed algorithms for the same task. This has lead to recent interest in composable core-sets model since its introduction \cite{mirrokni2015randomized, assadi2017randomized, indyk2018composable}.

\paragraph{An almost optimal approximate composable core-set.} In \cite{indyk2018composable}, the authors designed composable core-sets of size $O(k\log k)$ with approximation guarantee of $\tilde{O}(k)^k$ for the determinant maximization problem. Moreover, they showed that the best approximation one can achieve is $\Omega(k^{k-o(k)})$ for any polynomial size core-sets, proving that their algorithm is almost optimal.
However, its complexity makes it less appealing in practice. First of all, the algorithm requires an explicit representation of the point set, which is not present for many DPP applications; a common case is that the DPP kernel is given by an oracle which returns the inner product between the points; in this setting, the algorithm needs to construct the associated gram matrix, and use  SVD decomposition to recover the point set, making the  time and memory quadratic in the size of the point-set. 
Secondly, even in the point set setting, the algorithm is not efficient for large inputs
as it requires
solving 
$O(kn)$ linear programs, where $n$ is size of the point set.

In this paper, we focus on two simple to implement algorithms which are typically exploited in practice, namely the Greedy and the Local-Search algorithms.
We study these algorithms from theoretical and experimental points of view for the composable core-set problem with respect to the determinant maximization objective, and we compare their performance with the algorithm of \cite{indyk2018composable}, which we refer to as the \LP ~algorithm.

\subsection{Our Contributions}
\paragraph{Greedy algorithm.}  
The greedy algorithm for determinant maximization proceeds in $k$ iterations and at each iteration it picks the point that maximizes the volume of the parallelepiped formed by the set of points picked so far. 
 \cite{cm-smvsm-09} has studied the approximation of the greedy algorithm in the standard setting. In the context of submodular maximization over large data sets,  variants of this algorithm have been studied \cite{mirzasoleiman2013distributed}. 
One can view the greedy algorithm as a heuristic for constructing a core-set of size $k$. To the best of our knowledge, the previous analysis of this  algorithm  does not provide any multiplicative approximation guarantee in the context of composable core-sets.\footnote{For more details, see related work.}

Our first result shows the first multiplicative approximation factor for composable core-sets on the determinant maximization objective achieved by the Greedy algorithm.

\begin{theorem}\label{thm:mainthm1}
Given a set of points $P\subset \mathbb{R}^d$, the Greedy algorithm achieves a $O(C^{k^2})$-composable core-set of size $k$ for the determinant maximization problem, where $C$ is a constant.
\end{theorem}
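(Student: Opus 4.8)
The plan is to show that for any optimal solution $OPT=\{o_1,\dots,o_k\}$ realizing $\maxdet_k(\bigcup_i P_i)$, one can extract $k$ \emph{distinct} points from $\bigcup_i c(P_i)$ whose squared volume is at least $\maxdet_k(\bigcup_i P_i)/O(C^{k^2})$. Throughout I work with the volume $\vol(\cdot)=\sqrt{\det(\mathrm{Gram}(\cdot))}$ and the exterior-algebra identity $\vol(\{v_1,\dots,v_k\})=\norm{v_1\wedge\cdots\wedge v_k}$, so that multilinearity of the wedge lets me substitute expressions for the $o_t$'s directly inside the product and then collect terms.

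The first ingredient is a structural description of the greedy output of a single part. Fix a part $P$ with greedy points $g_1,\dots,g_k$, let $u_1,\dots,u_k$ be their Gram--Schmidt orthonormalization, and write $d_j=\dist(g_j,\mathrm{span}(g_1,\dots,g_{j-1}))$, so $d_1\ge\cdots\ge d_k$. The defining maximality of greedy---at step $j$ it selects the available point farthest from $\mathrm{span}(g_1,\dots,g_{j-1})$---gives two facts I will lean on. First, for every $p\in P$ and every $j$, $\dist(p,\mathrm{span}(g_1,\dots,g_{j-1}))\le d_j$; applied to $p=g_{j'}$ this shows the change-of-basis matrix $A$ with $g_j=\sum_{l\le j}A_{jl}u_l$ is lower triangular with $A_{jj}=d_j$ and $\abs{A_{jl}}\le d_l$. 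Writing $A=\tilde A D$ with $D=\mathrm{diag}(d_1,\dots,d_k)$ and $\tilde A$ unit lower triangular with entries bounded by $1$, its inverse obeys $\norm{\tilde A^{-1}}\le C^{k}$, hence $\norm{A^{-1}}\le C^{k}/d_k$. Second, combining this with $\abs{\langle p,u_j\rangle}\le d_j$ yields a \emph{coefficient bound}: the projection of any $p\in P$ onto $\mathrm{span}(g_1,\dots,g_k)$ equals $\sum_l\gamma_l g_l$ with every $\abs{\gamma_l}\le C^{k}$. I will also record a \emph{resolution bound}: if a subspace $H$ has $\dim H<k$, then picking a unit $w\in\mathrm{span}(g_1,\dots,g_k)$ orthogonal to $H$ and to any prescribed set of fewer than $k$ already-used directions, the identity $(\langle g_j,w\rangle)_j=A\,\beta$ (with $\beta$ the coordinates of $w$) gives $\max_j\dist(g_j,H)\ge\max_j\abs{\langle g_j,w\rangle}\ge \norm{\beta}/(\sqrt k\,\norm{A^{-1}})\ge d_k/(C^{k}\sqrt k)$, and the maximizer is a greedy point avoiding the used directions.

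With these in hand I expand the wedge. Writing each $o_t=\Pi_t o_t+e_t$, where $\Pi_t$ projects onto the span of the greedy points of $o_t$'s part and $e_t$ is the orthogonal error (so $\norm{e_t}\le d^{(i_t)}_k$ by the first fact), multilinearity gives $o_1\wedge\cdots\wedge o_k=\sum_{U\subseteq[k]}\bigwedge_{t\in U}e_t\wedge\bigwedge_{t\notin U}\Pi_t o_t$. In the main term $U=\emptyset$ I replace each $\Pi_t o_t$ by its coefficient expansion $\sum_l\gamma^t_l g^{(i_t)}_l$; terms repeating a greedy point vanish, so each surviving term is a wedge of $k$ distinct core-set points, of norm at most $\sqrt{\maxdet_k(\bigcup_i c(P_i))}$, and the number of terms times the coefficient product $(C^k)^k$ is at most $C^{k^2}$. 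For a term with error set $U$, I order the vectors greedy-first and bound the volume factor contributed by each $e_t$ against a running subspace $H$ of dimension $<k$; the resolution bound supplies an unused greedy point $\hat g_t$ of the same part with $\dist(e_t,H)\le C^k\dist(\hat g_t,H)$, and a counting check (part $i$ contributes exactly $k_i\le k$ of the $k$ vectors while owning $k$ greedy points) lets the $\hat g_t$'s be chosen distinct from each other and from the greedy points present. Thus each error term is also at most $C^{k^2}\sqrt{\maxdet_k(\bigcup_i c(P_i))}$, and summing over the $2^k$ choices of $U$ keeps the bound at $O(C^{k^2})$.

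The main obstacle is precisely the error terms: unlike the \LP{} construction, the greedy points of a part need not span the $OPT$ points drawn from that part, so the orthogonal components $e_t$ may point outside every part's greedy span and must be charged to genuine core-set directions. The resolution bound---together with the bookkeeping that a part supplies only as many replacement points as $OPT$ draws from it---is what makes this charging possible, and it is the step where the constraint $\dim H<k$ (available only because we are filling the last coordinates of a $k$-dimensional volume) is essential. Everything else is multilinearity and the two greedy estimates; assembling them gives $\maxdet_k(\bigcup_i P_i)=\vol(OPT)^2\le O(C^{k^2})\cdot\maxdet_k(\bigcup_i c(P_i))$, which is the claim after renaming the constant.
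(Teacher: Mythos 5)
Your proposal is correct, but it takes a genuinely different route from the paper. The paper first proves a reduction (Corollary~\ref{cor:dirtodet}): any $\alpha$-approximate core-set for the \emph{$k$-directional height} of a point set is an $\alpha^{2k}$-composable core-set for determinant maximization, via a one-point-at-a-time exchange on the optimal solution; it then shows (Lemma~\ref{lem:greedy-coreset}) that greedy is a $2k\cdot 3^k$-approximate directional-height core-set, giving $(2k\cdot 3^k)^{2k}=O(C^{k^2})$. You instead bypass the intermediate notion entirely and expand the optimal wedge $o_1\wedge\cdots\wedge o_k$ by multilinearity after splitting each $o_t$ into its projection onto its part's greedy span plus an orthogonal error, charging every surviving elementary wedge to a wedge of $k$ \emph{distinct} core-set points. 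The technical cores overlap substantially: your coefficient bound ($\Pi p=\sum_l\gamma_l g_l$ with $\abs{\gamma_l}\leq C^k$, from the triangular structure of the greedy basis) is essentially the paper's Claim~\ref{claim:greedycoreset}, and your resolution bound plays the role of the paper's Lemma~\ref{lem:greedy-d-p-p'} (controlling components orthogonal to the greedy span), but you obtain it from a clean operator-norm estimate on $A^{-1}=D^{-1}\tilde A^{-1}$ rather than the paper's induction on a ``projected'' greedy execution, which is arguably tidier. What each approach buys: the paper's framework is modular and is reused verbatim for the local-search analysis, and its exchange argument requires no bookkeeping; your wedge expansion is self-contained and handles the orthogonal errors more directly, at the cost of the distinctness bookkeeping for the substituted points --- which you do get right, since each replacement $\hat g_t$ is chosen outside the span $H_{-t}$ of the other $k-1$ vectors of the current term, hence automatically differs from every greedy point already present, and the constraint $\dim H_{-t}<k$ needed for the resolution bound always holds. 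Two small points to tighten in a full write-up: handle the degenerate case $d_k^{(i)}=0$ (where $A$ is singular but the error vectors of that part vanish, so the argument degenerates gracefully), and note that the final squaring from volume to determinant doubles the exponent's base but preserves the $O(C^{k^2})$ form.
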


\paragraph{The Local Search algorithm.} Our main contribution is to propose to use the Local Search algorithm for constructing a composable core-set for the task of determinant maximization. The algorithm starts with the solution of the Greedy algorithm and at each iteration, swaps in a point that is not in the core-set with a point that is already in the core-set, as long as this operation increases the volume of the set of picked points. While still being simple, as we show, this algorithm achieves a near-optimal approximation guarantee. 

\begin{theorem}\label{thm:mainthm2}
Given a set of points $P\subset \mathbb{R}^d$, the Local Search algorithm achieves an $O(k)^{2k}$-composable core-set of size $k$ for the determinant maximization problem.
\end{theorem}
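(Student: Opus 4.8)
The plan is to fix an arbitrary collection $P_1,\dots,P_m$, let $\mathrm{OPT}=\{o_1,\dots,o_k\}$ attain $\maxdet_k(\bigcup_i P_i)$, and write $\sigma(j)$ for the index with $o_j\in P_{\sigma(j)}$. Denote by $S_i=c(P_i)=\{s^{(i)}_1,\dots,s^{(i)}_k\}$ the output of Local Search on $P_i$. The first step is to extract the one structural consequence of local optimality: since no swap improves the volume, for every $p\in P_i$ and every $\ell$ we have $\dist(p,\mathrm{span}(S_i\setminus s^{(i)}_\ell))\le h_\ell$, where $h_\ell:=\dist(s^{(i)}_\ell,\mathrm{span}(S_i\setminus s^{(i)}_\ell))$, because the ratio of these two distances is exactly the factor by which the volume changes when $s^{(i)}_\ell$ is swapped for $p$.

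From this I would derive the key decomposition lemma. Writing $p=\sum_\ell c_\ell s^{(i)}_\ell+p^\perp$ with $p^\perp\perp\mathrm{span}(S_i)$, orthogonality of $p^\perp$ to every height direction turns the swap inequality into $c_\ell^2 h_\ell^2+\|p^\perp\|^2\le h_\ell^2$, so simultaneously $|c_\ell|\le 1$ for all $\ell$ and $\|p^\perp\|\le\min_\ell h_\ell$. Applying this to each $o_j$ gives $o_j=\sum_\ell c_{j,\ell}s^{(\sigma(j))}_\ell+o_j^\perp$ with bounded coefficients and a perpendicular part whose length is controlled by the thinnest height of the core-set it came from.

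Next I would bound $\vol(\mathrm{OPT})$ by expanding the determinant multilinearly in the $k$-dimensional span of $\mathrm{OPT}$. Substituting the decomposition column by column and using multilinearity of the signed determinant, $\det(o_1,\dots,o_k)=\sum_f\big(\prod_j\gamma_{j,f(j)}\big)\det(M_f)$, where each $f(j)$ selects either one of the $k$ core-set vectors of $S_{\sigma(j)}$ (with coefficient $\gamma_{j,f(j)}=c_{j,f(j)}$, so $|\gamma|\le 1$) or the perpendicular part $o_j^\perp$ (with coefficient $1$). The terms in which every column is a core-set vector contribute at most $k^k\cdot\sqrt{\maxdet_k(\bigcup_i S_i)}$: there are at most $k^k$ of them, each coefficient product is $\le 1$, and every nonzero such term is the volume of $k$ distinct points of $\bigcup_i S_i$. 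This is the source of the target $O(k)^k$ volume factor, equivalently $O(k)^{2k}$ after squaring.

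The hard part, and the step I expect to be the main obstacle, is controlling the remaining terms $M_f$ that use at least one perpendicular column $o_j^\perp$. Here $\|o_j^\perp\|$ is small (at most the smallest height of $S_{\sigma(j)}$), but unlike the single-set case the perpendicular parts coming from different parts are orthogonal only to their own core-set spans, not to the other parts' core-set vectors, so one cannot simply factor each mixed volume as (perpendicular extent) times (core-set volume). I would handle this by charging each perpendicular column to the core-set height that dominates its length: since $\|o_j^\perp\|\le h_\ell$ for every $\ell$ of $S_{\sigma(j)}$, each $\vol(M_f)$ can be bounded by a product of genuine core-set heights and hence by $\sqrt{\maxdet_k(\bigcup_i S_i)}$ up to a further $\mathrm{poly}(k)^k$ factor that is absorbed into $O(k)^k$. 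Summing the $O(k)^k$ terms, setting $\maxdet=\vol^2$, and rearranging yields $\maxdet_k(\bigcup_i P_i)\le O(k)^{2k}\,\maxdet_k(\bigcup_i S_i)$, which is exactly the claimed composability bound; the size bound $k$ is immediate from $|c(P_i)|=|S_i|=k$.
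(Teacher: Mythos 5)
Your structural consequences of local optimality are correct and nicely packaged: writing $p=\sum_\ell c_\ell s^{(i)}_\ell+p^\perp$ and using that $p^\perp$ is orthogonal to $\mathrm{span}(S_i)$ does give $c_\ell^2h_\ell^2+\|p^\perp\|^2\le h_\ell^2$, hence $|c_\ell|\le 1$ and $\|p^\perp\|\le\min_\ell h_\ell$; this is essentially the content of the paper's Lemmas \ref{lem:lcs-first} and \ref{lem:lcs-second} in a cleaner form. The gap is exactly where you flag it: the mixed terms of the global multilinear expansion. The proposed fix --- ``charge each perpendicular column to the core-set height that dominates its length'' --- does not go through. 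To convert $\|o_j^\perp\|\cdot\vol(\text{other columns})$ into the volume of a genuine $k$-subset of $\bigcup_i S_i$, you would need some $s^{(\sigma(j))}_\ell$ whose distance to the span of the \emph{other columns of $M_f$} is at least (a polynomial fraction of) $h_\ell$. But $h_\ell$ measures the distance of $s^{(\sigma(j))}_\ell$ to the span of its \emph{own} part's remaining core-set points; the other columns of $M_f$ come from different parts, and $s^{(\sigma(j))}_\ell$ may lie entirely inside their span, so the candidate replacement volume can be $0$ while $h_\ell\cdot\vol(\text{other columns})$ is large. Nothing in local optimality controls these cross-part configurations, so the mixed terms cannot each be bounded by $\mathrm{poly}(k)^k\cdot\maxvol_k(\bigcup_i S_i)$ by this route (and even if they could, a $\mathrm{poly}(k)^k$ loss per term would overshoot the claimed $O(k)^{2k}$).

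The missing idea is to avoid comparing all $k$ optimal points to all $m$ core-sets simultaneously. The paper swaps out the optimal points \emph{one at a time}: if $W$ is the current set and $w_1\in P_i$ is outside the union of core-sets, then $\Vol(W)=\dist(w_1,\h)\cdot\Vol(W-w_1)$ with $\h=\langle w_2,\dots,w_k\rangle\in\h_{k-1}$, so it suffices that $c(P_i)$ contain a single point $q$ with $\dist(q,\h)\ge\dist(w_1,\h)/\alpha$ for an \emph{arbitrary} $(k-1)$-dimensional $\h$ (the ``$k$-directional height'' core-set property); iterating $k$ times gives an $\alpha^{2k}$-composable core-set (Corollary \ref{cor:dirtodet}). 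This reduction confines every comparison to one point set and one subspace, which is why the cross-part interaction never arises. Your decomposition lemma almost finishes the resulting single-set statement: $\dist(p,\h)\le\sum_\ell|c_\ell|\dist(s_\ell,\h)+\|p^\perp\|\le k\cdot h(c(P_i),\h)+\min_\ell h_\ell$, and one further geometric fact --- among $k$ points whose projections onto the $(k-1)$-dimensional $\h$ are linearly dependent with coefficients of magnitude at most $1$ (guaranteed by a maximum-volume-basis argument as in Claim \ref{clm:lcs-maxvol}), some $s_j$ satisfies $h_j\le k\cdot h(c(P_i),\h)$ --- yields $\alpha=2k$ and hence the $O(k)^{2k}$ bound. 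I would recommend rerouting your argument through this one-swap-at-a-time reduction rather than trying to repair the multilinear expansion.
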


\paragraph{Directional height.} Both of our theoretical results use a modular framework: In Section \ref{sec:directional-height}, we introduce a new geometric notion defined for a point set called \emph{directional height}, which is closely related to the width of a point set defined in \cite{AHV-gavc-05}. We show that core-sets for preserving the directional height of a point set in fact provide core-sets for the determinant maximization problem. Finally, we show that running either the Greedy (Section \ref{sec:greedy}) or Local Search (Section \ref{sec:local-search}) algorithms on a point set obtain composable core-sets for its directional height.
We believe that this new notion might find applications elsewhere.

\paragraph{Experimental resutls.}
Finally, we implement all three algorithms and compare their performances on two real data sets: MNIST\cite{lecun1998gradient} data set and GENES data set, previously used in \cite{batmanghelich2014diversifying, li2015efficient} in the context of DPPs. Our empirical results show that in more than $87\%$ percent of the cases, the solution reported by the Local Search algorithm improves over the Greedy algorithm. The average improvement varies from $1\%$ to up to $23\%$ depending on the data set and the settings of other parameters such as $k$. We further show that although Local Search picks fewer points than the tight approximation algorithm of \cite{indyk2018composable} ($k$ vs. upto $O(k\log k)$), its performance is better and it runs faster.

\subsection{Related Work}
In a broader scope, determinant maximization is an instance of the (non-monotone) submodular maximization where   the logarithm of the determinant is the submodular objective function.  There is a long line of work on distributed submodular optimization and its variants \cite{chierichetti2010max, badanidiyuru2014streaming,mirzasoleiman2016distributed,kumar2015fast}. In particular, there has been several efforts to design composable core-sets in various  settings of the problem \cite{mirzasoleiman2013distributed,mirrokni2015randomized,barbosa2015power}; In \cite{mirzasoleiman2013distributed}, authors study the problem for monotone functions, and show the greedy algorithm offers a $min(m,k)$-composable core-set for the problem where $m$ is the number of parts. On the other hand, \cite{IMMM-ccdcm-14} shows that it is impossible to go beyond an approximation factor of $\Omega(\frac{\sqrt{k}}{\log k})$ with polynomial size core-sets. Moreover, \cite{mirrokni2015randomized,barbosa2015power} consider a variant of the problem where the data is randomly distributed,   and show the greedy algorithm achieves constant factor ``randomized'' composable core-sets for both monotone and non-monotone functions.   
However, one can notice that  these results can not be directly compared to the current work, as a multiplicative approximation for determinant  converts to an  additive guarantee for the corresponding submodular function.

As discussed before, the determinant is one way to measure the diversity of a set of items. Diversity maximization with respect to other measures has been also extensively studied in the literature, 
\cite{hassin1997approximation,gollapudi2009axiomatic,borodin2012max,bhaskara2016linear}. More recently, the problem has  received  more attention in distributed models of computation, and  for several diversity measures constant factor approximation algorithms have been devised \cite{zadeh2017scalable,IMMM-ccdcm-14,ceccarello2017mapreduce}. However, these notions are typically defined by considering only the pairwise dissimilarities between the items; for example, the summation of the dissimilarities over all pairs of items in a set can define its diversity. 

One can also go further, and study the problem under additional constraints, such as matroid and knapsack constraints. This has been an active line of research in the past few years, and several centralized and distributed algorithms have been designed in this context for submodular optimization \cite{mirzasoleiman2016fast,lee2009non,lee2010submodular,chekuri2015streaming} and in particular determinant maximization  \cite{ebrahimi2017subdeterminant,nikolov2016maximizing}.
\section{Preliminaries}
\label{sec:prelim}
Throughout the paper, we fix $d$ as the dimension of the ambient space and $k (k\leq d)$ as the size parameter of the determinant maximization problem. We call a subset of $\Re^d$ a point set, and use the term point or vector to refer to an element of a point set. For a set of points $S\subset \Re^d$ and a point $p\in \Re^d$, we write $S+p$ to denote the set $S\cup\{p\}$, and for a point $s\in S$, we write $S-p$ to denote the set $S\setminus\{s\}$.

Let $S$ be a point set of size $k$. We  use $\Vol(S)$ to denote the $k$-dimensional volume of the parallelepiped spanned by vectors  in $S$. Also, let 
  $M_S$ denote a $k\times d$ matrix where each row represents a point of $S$.  Then, the following relates volume to the determinant 
$\det(M_SM_S^\intercal) = \Vol^2(S).$
So the determinant maximization problem can also be phrased as \emph{volume maximization}. We use the former, but  because of  the geometric nature of the arguments, sometimes we switch to the volume notation. For any point set $P$, we  use $\maxdet_k$ to denote the optimal of determinant maximization for $P$, i.e. $\maxdet_k(P)= \max_{S} \det(M_SM_S^\intercal)$, where $S$ ranges over all subsets of size $k$. $\text{MAXVOL}_k$ is defined similarly. 

For a point set $P$, we use $\langle P \rangle$ to refer to the linear subspace spanned by the vectors in $P$. We also denote the set of all $k$-dimensional linear subspaces by $\h_k$. For a point $p$ and a subspace $\h$,  we use $\dist(p,\h)$ to show the Euclidean distance of $p$ from $\h$.

\paragraph{Greedy algorithm for volume maximization. }As pointed out before, a widely used algorithm for determinant maximization in the offline setting is a greedy algorithm which given a point set $P$ and a parameter $k$ as the input does the following: start with an empty set $\cs$. For $k$ iterations, add $\text{argmax}_{p \in P} \dist(p,\langle \cs \rangle )$ to $\cs$. The result would be a subset of size $k$ which has the following guarantee.
\begin{theorem}[\cite{cm-smvsm-09}]
	\label{thm:greedyoffline}
Let $P$ be a point set and $\cs$ be the output of the greedy algorithm on $P$. Then $\Vol(\cs)	\geq \frac{\maxvol_k(P)}{k!}$. 
\end{theorem}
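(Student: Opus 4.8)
The plan is to prove the slightly stronger bound $\Vol(\cs)\ge \maxvol_k(P)/k^{k/2}$, which already implies the statement: pairing the factor $j$ of $k!$ with $k+1-j$ gives $j(k+1-j)\ge k$, so $(k!)^2\ge k^k$ and hence $k^{k/2}\le k!$. The starting point is the standard description of the greedy volume. Writing $\cs=\{c_1,\dots,c_k\}$ in the order the algorithm selects them and setting $V_{i-1}=\langle\{c_1,\dots,c_{i-1}\}\rangle$ and $h_i=\dist(c_i,V_{i-1})$, Gram--Schmidt gives $\Vol(\cs)=\prod_{i=1}^k h_i$. Since greedy chooses $c_i$ to maximize $\dist(\cdot,V_{i-1})$ over all of $P$, we have $h_i=\max_{p\in P}\dist(p,V_{i-1})$; in particular, fixing an optimal set $O=\{o_1,\dots,o_k\}$, we get $\dist(o_j,V_{i-1})\le h_i$ for every $j$ and every $i\le k$, and moreover $h_1\ge\cdots\ge h_k$ because enlarging the spanning subspace only shrinks distances.

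Next I would pass to coordinates adapted to the greedy flag. Extend the Gram--Schmidt vectors of $\cs$ to an orthonormal basis $u_1,\dots,u_d$ of $\Re^d$ with $V_i=\langle\{u_1,\dots,u_i\}\rangle$, and let $A=(a_{jl})$ be the $k\times d$ matrix with entries $a_{jl}=\langle o_j,u_l\rangle$. As the change of basis is orthogonal, $\Vol(O)^2=\det(M_OM_O^\intercal)=\det(AA^\intercal)$. Writing $\gamma_l^2=\sum_{j=1}^k a_{jl}^2$ for the squared norm of column $l$, the key consequence of the first paragraph is a bound on the \emph{suffix sums} of these column norms: since $u_i,\dots,u_d$ span $V_{i-1}^\perp$,
\[ f_i:=\sum_{l\ge i}\gamma_l^2=\sum_{j=1}^k\sum_{l\ge i}a_{jl}^2=\sum_{j=1}^k \dist(o_j,V_{i-1})^2\le k\,h_i^2 \qquad (i\le k). \]

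I would then bound $\det(AA^\intercal)$ by these suffix sums in two moves. Cauchy--Binet expands $\det(AA^\intercal)=\sum_{|L|=k}\det(A_L)^2$ over the $k$-subsets $L$ of columns, and Hadamard's inequality bounds each $k\times k$ minor by the product of its column norms, $|\det(A_L)|\le\prod_{l\in L}\gamma_l$, so that $\Vol(O)^2\le\sum_{|L|=k}\prod_{l\in L}\gamma_l^2=e_k(\gamma_1^2,\dots,\gamma_d^2)$, the $k$-th elementary symmetric polynomial of the squared column norms. The remaining purely combinatorial claim is $e_k(\gamma_1^2,\dots,\gamma_d^2)\le\prod_{i=1}^k f_i$, which I would prove by induction on $k$ by peeling off the smallest index: grouping the monomials of $e_k$ by their least factor $l_1$ gives $e_k=\sum_{l_1}\gamma_{l_1}^2\,e_{k-1}(\gamma_{l_1+1}^2,\gamma_{l_1+2}^2,\dots)$; the inductive hypothesis bounds the inner polynomial by $\prod_{i=1}^{k-1}f_{l_1+i}\le\prod_{i=1}^{k-1}f_{i+1}$ (the $f_i$ are non-increasing and $l_1\ge1$), and summing $\gamma_{l_1}^2$ over $l_1$ supplies the final factor $f_1$. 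Combining this with $f_i\le k h_i^2$ yields $\Vol(O)^2\le\prod_{i=1}^k k h_i^2=k^k\,\Vol(\cs)^2$, i.e. $\Vol(\cs)\ge\maxvol_k(P)/k^{k/2}\ge\maxvol_k(P)/k!$.

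I expect the main obstacle to be conceptual rather than computational, and it lies precisely in the comparison step. The tempting direct route---match each optimal point to a greedy step and argue its Gram--Schmidt height grows by at most a factor of $k$---founders because the greedy flag $V_i$ and the optimal subspaces $\langle\{o_1,\dots,o_{i-1}\}\rangle$ are unrelated, so $\dist(o_j,V_{i-1})$ gives no control over $\dist(o_j,\langle\{o_1,\dots,o_{i-1}\}\rangle)$. The per-level bound $\dist(o_j,V_{i-1})\le h_i$ alone is genuinely insufficient: optimal vectors may carry large mass in directions $u_l$ with $l>k$ that lie outside the greedy span, and a matrix with all entries at most $1$ can still have $k$-volume far exceeding $k!$. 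The resolution---and the heart of the argument---is to exploit greedy maximality at \emph{all} levels simultaneously through the suffix sums $f_i$, whose aggregation (via the elementary-symmetric bound) is exactly what tames the out-of-span components that a single inequality cannot.
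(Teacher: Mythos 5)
Your argument is correct, and it is worth noting up front that the paper itself does not prove this statement: Theorem~\ref{thm:greedyoffline} is imported as a black box from \cite{cm-smvsm-09}, so there is no in-paper proof to match against. I checked each step of your chain: the identity $f_i=\sum_j\dist(o_j,V_{i-1})^2\le k h_i^2$ is right (the residual of $o_j$ against $V_{i-1}$ lives in $\langle u_i,\dots,u_d\rangle$, and greedy maximality over all of $P\supseteq O$ gives the per-point bound $h_i$); Cauchy--Binet plus columnwise Hadamard correctly yields $\Vol(O)^2\le e_k(\gamma_1^2,\dots,\gamma_d^2)$; and the combinatorial inequality $e_k(\gamma_1^2,\dots,\gamma_d^2)\le\prod_{i=1}^k f_i$ holds --- your induction works, and it can even be seen in one line, since every monomial $\gamma_{l_1}^2\cdots\gamma_{l_k}^2$ of $e_k$ with $l_1<\cdots<l_k$ satisfies $l_j\ge j$ and therefore appears (with nonnegative companions) in the expansion of $\prod_j\bigl(\sum_{m\ge j}\gamma_m^2\bigr)$. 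The only unstated point is the degenerate case $h_i=0$ for some $i$, where all of $P$ lies in a subspace of dimension below $k$ and both sides vanish; this is trivial but worth a sentence. Your route differs from the one in the cited reference, which (as usually presented) peels off one greedy step at a time and loses a multiplicative factor of $i$ at the $i$-th step, accumulating $k!$; your one-shot linear-algebraic aggregation through the suffix sums buys a strictly better constant, $k^{k/2}$ in place of $k!$ (and your reduction $(k!)^2=\prod_j j(k+1-j)\ge k^k$ is valid), at the cost of invoking Cauchy--Binet rather than elementary exchange arguments. Since the theorem is only used downstream to bound the number of local-search iterations by $O(k\log k)$, either constant suffices for the paper's purposes.
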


\subsection{Core-sets}
Core-set is a generic term used for a \emph{small subset} of the data that represents it very well. More formally, for a given optimization task, a core-set is a mapping $c$ from any data set $P$ into one of its subsets such that the solution of the optimization over the core-set $c(P)$ approximates the solution of the optimization over the original data set $P$.  The notion was first introduced in \cite{AHV-aemp-04} and many variations of core-sets exist. In this work, we consider the notion of \emph{composable core-sets} defined in \cite{IMMM-ccdcm-14}.



\begin{definition}[$\alpha$-Composable Core-sets]
 A function $c(P)$ that maps the input set $P \subset \Re^d$ into one of its subsets is called an $\alpha$-composable core-set for a function $f \colon 2^{\Re^d}\to \Re$ if, for any collection of sets $P_1,\cdots, P_n\subset \Re^d$, we have
 $ f(C)\geq f(P)/\alpha $
 where $P=\bigcup_{i\leq n} P_i$ and $C=\bigcup_{i\leq n} c(P_i)$.
\end{definition}

For simplicity, we will often refer to the set $c(P)$ as the core-set for $P$ and use the term ``core-set function'' with respect to $c(\cdot)$. The {\em size} of $c(\cdot)$ is defined as the smallest number $t$ such that $c(P) \le t$ for all sets $P$ (assuming it exists). Unless otherwise stated, we might use the term ``core-set'' to refer to a composable core-set when clear from the context.
Our goal is to find composable core-sets for the determinant maximization problem.

\section{$k$-Directional Height}\label{sec:directional-height}
As pointed out in the introduction, we introduce a new geometric notion called directional height, and reduce the task of  finding composable core-sets for determinant maximization to finding core-sets for this new notion. 
\begin{definition}[$k$-Directional Height]
Let $P\subset \Re^d$ be a point set and $\h \in \h_{k-1}$ be a $(k-1)$-dimensional subspace. We define the $k$-directional height of $P$ with respect to $\h$, denoted by $h(P,\h)$, to be the distance of the farthest point in $P$ from $\h$, i.e., $h(P,\h) = \max_{p\in P} \dist(p,\h)$.
\end{definition}
The notion is an instance of an \emph{extent measure} defined in 
\cite{AHV-gavc-05}. It is also related to the notion of \emph{directional width} of a point 
set previously used in \cite{AHV-gavc-05}, which for a direction vector $v\in \Re^d$ is 
defined to be $\max_{p\in P} \langle v\cdot p\rangle - \min_{p\in P} \langle v\cdot 
p\rangle$.

Next, we define core-sets with respect to this notion. It is essentially a subset of the point set that approximately preserves the $k$-directional height of the point set with respect  to any subspace in $\h_k$.
\begin{definition}[$\alpha$-Approximate Core-set for the $k$-Directional Height]
\label{d:height}
Given a point set $P$, a subset $C\subseteq P$ is a $\alpha$-approximate core-set for the $k$-directional height of $P$, if for any $\h \in \h_{k-1}$, we have $h(C,\h)\geq h(P,\h)/\alpha$.

We also say a mapping $c(.)$ which maps any point set in $\Re^d$ to one of its subsets, is an $\alpha$-approximate core-set for the $k$-directional height problem, if the above relation holds for any point set $P$ and $c(P)$.
\end{definition}
The above notion of core-sets for $k$-directional height is similar to the notion of $\eps$-kernels defined in \cite{AHV-gavc-05} for the directional width of a point set. 

We connect it to composable core-sets for determinant maximization by the following lemma. 
\begin{lemma}
	
	Let $P_1,\dots,P_m \in \Re^d$ be an arbitrary collection of point sets, and for any $i$, let $c(P_i)$ be an $\alpha$-approximate core-set for the $k$-directional height for $P_i$. Then 
	$$\maxdet_k (\bigcup_{i=1}^m P_i)\leq \alpha^{2k}\cdot \maxdet_k (\bigcup_{i=1}^m c(P_i) ).$$ 
\end{lemma}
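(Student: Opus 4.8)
The plan is to work with volumes rather than determinants: since $\det(M_SM_S^\intercal)=\Vol^2(S)$, it suffices to produce, inside $C=\bigcup_i c(P_i)$, a set of $k$ points whose volume is at least $\maxvol_k(P)/\alpha^k$, where $P=\bigcup_i P_i$; squaring this bound yields the claimed $\alpha^{2k}$ factor for $\maxdet_k$. Fix an optimal solution $S=\{v_1,\dots,v_k\}$ for $P$ (if $\maxvol_k(P)=0$ there is nothing to prove, so assume $\Vol(S)>0$, i.e.\ the $v_j$ are linearly independent). The main tool is the base-times-height identity $\Vol(T)=\dist(w,\langle T\setminus\{w\}\rangle)\cdot\Vol(T\setminus\{w\})$, valid for any finite set $T$ and any $w\in T$, which lets me express a volume as a single distance times the volume of the remaining points.

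First I would turn $S$ into a subset of $C$ by an exchange argument, swapping the optimal points out one at a time. Define $S^{(0)}=S$, and for $j=1,\dots,k$ replace $v_j$ by a suitable core-set point $u_j\in C$, setting $S^{(j)}=\{u_1,\dots,u_j,v_{j+1},\dots,v_k\}$. The crucial choice is the subspace
\[
H_j=\langle u_1,\dots,u_{j-1},v_{j+1},\dots,v_k\rangle ,
\]
which is spanned by the $k-1$ points of $S^{(j-1)}\setminus\{v_j\}=S^{(j)}\setminus\{u_j\}$. Since $v_j\in P$ it lies in some $P_i$, so applying the $k$-directional-height core-set guarantee of $c(P_i)$ to $H_j$ produces a point $u_j\in c(P_i)\subseteq C$ with $\dist(u_j,H_j)\ge \dist(v_j,H_j)/\alpha$. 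Because $S^{(j)}\setminus\{u_j\}$ and $S^{(j-1)}\setminus\{v_j\}$ are the same set (spanning $H_j$), applying the base-times-height identity to both $S^{(j)}$ and $S^{(j-1)}$ makes the common base volume cancel, giving
\[
\frac{\Vol(S^{(j)})}{\Vol(S^{(j-1)})}=\frac{\dist(u_j,H_j)}{\dist(v_j,H_j)}\ge\frac1\alpha .
\]
Telescoping over $j=1,\dots,k$ then yields $\Vol(S^{(k)})\ge\Vol(S)/\alpha^k$ with $S^{(k)}=\{u_1,\dots,u_k\}\subseteq C$, which is exactly what is needed.

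I expect the main obstacle to be getting the bookkeeping of subspaces right: a naive incremental construction (pick $u_j$ far from $\langle u_1,\dots,u_{j-1}\rangle$) fails because the distance one controls through directional height and the distance that governs the volume refer to \emph{different} $(k-1)$-dimensional subspaces, and they need not telescope. The choice of $H_j$ above, containing both the already-selected core-set points and the not-yet-processed optimal points, is precisely what aligns these two subspaces so the bases cancel. The remaining technical points are to verify that $H_j$ is genuinely $(k-1)$-dimensional (so the directional-height definition applies) and that $\dist(v_j,H_j)>0$; I would handle both by induction, maintaining the invariant $\Vol(S^{(j)})>0$. Indeed $\Vol(S^{(j-1)})>0$ forces $\dist(v_j,H_j)>0$ and linear independence of the $k-1$ spanning vectors of $H_j$, and then $\dist(u_j,H_j)\ge\dist(v_j,H_j)/\alpha>0$ keeps $\Vol(S^{(j)})>0$; this also guarantees the $u_j$ are distinct, so $S^{(k)}$ is a legitimate size-$k$ subset of $C$.
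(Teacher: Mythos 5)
Your proposal is correct and follows essentially the same route as the paper: the paper proves exactly your single-swap claim (replace one optimal point $w_1$ by a core-set point $q$ with $\dist(q,H)\geq \dist(w_1,H)/\alpha$ for $H$ spanned by the remaining $k-1$ points, so the volume drops by at most $\alpha$) and then iterates it at most $k$ times, giving $\alpha^k$ in volume and hence $\alpha^{2k}$ in determinant. Your additional bookkeeping about non-degeneracy of $H_j$ and distinctness of the $u_j$ is a harmless elaboration of details the paper leaves implicit.
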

\begin{proof}
	Let $W \subset \bigcup_{i=1}^m P_i$ be any subset of size $k$, and also let $w \in W \setminus \bigcup_{i=1}^m c(P_i)$. We claim that there is a point $q$ in the union of the core-sets such that $\alpha\cdot \Vol(W-w+q) \geq \Vol(W)$.  Note that showing this claim is enough  to prove the lemma. Since, one can start from the optimum solution which achieves the largest volume on $\bigcup_{i=1}^m P_i$, and for at most $k$ iterations,  replace a point outside  $\bigcup_{i=1}^m c(P_i)$ by a point inside, while decreasing  the volume by a factor of at most $\alpha$. 
	
	So it remains to prove the claim. Let $W=\{w_1,\dots,w_k\}$, and let $H=\langle w_2,\dots,w_k \rangle \in \mathcal{H}_{k-1}$ be the plane spanned by $w_2,\dots,w_k$. By definition, $\Vol(W)=\dist(w_1,H)\cdot \Vol(W-w_1)$. On the other hand, suppose that $w_1\in P_i$. Then by our assumption, there exists $q \in c(P_i)$ so that $\dist(q,H) \geq \frac{\dist(w_1,H)}{\alpha}$. Replacing $w_1$ with $q$, we get
	\begin{align*}
	&\Vol(W-w_1+q) = \dist(q,H)\cdot \Vol(W-w_1)\\&\geq \frac{\dist(w_1,H)\cdot \Vol(W-w_1)}{\alpha} = \frac{\Vol(W)}{\alpha}
	\end{align*}
	which completes the proof.  
\end{proof}

\begin{corollary}
	\label{cor:dirtodet}
	Any mapping which is an $\alpha$-approximate core-set for $k$-directional height, is an $\alpha^{2k}$-composable core-set for the determinant maximization.
\end{corollary}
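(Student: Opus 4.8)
The plan is to derive this directly from the preceding lemma by simply unwinding the definition of an $\alpha$-composable core-set; essentially all of the geometric work has already been done, and what remains is a bookkeeping step matching quantifiers. Fix the objective $f = \maxdet_k$, and let $P_1,\dots,P_m \subset \Re^d$ be an arbitrary collection of point sets, writing $P = \bigcup_{i=1}^m P_i$ and $C = \bigcup_{i=1}^m c(P_i)$. By the definition of an $\alpha$-composable core-set, to prove that $c$ is an $\alpha^{2k}$-composable core-set for $\maxdet_k$ it suffices to establish the inequality $\maxdet_k(C) \ge \maxdet_k(P)/\alpha^{2k}$ for every such collection.

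The key observation is that the mapping-level hypothesis of the corollary supplies precisely the per-part hypothesis required by the lemma. Because $c$ is assumed to be an $\alpha$-approximate core-set for the $k$-directional height \emph{as a mapping}, the defining inequality $h(c(P_i),\h) \ge h(P_i,\h)/\alpha$ holds for every subspace $\h \in \h_{k-1}$ and every part $P_i$. In other words, each individual subset $c(P_i)$ is an $\alpha$-approximate core-set for the $k$-directional height of its own part $P_i$, which is exactly the collection hypothesis under which the preceding lemma operates.

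Applying that lemma then yields $\maxdet_k(P) \le \alpha^{2k}\cdot \maxdet_k(C)$, and rearranging gives precisely the bound $\maxdet_k(C) \ge \maxdet_k(P)/\alpha^{2k}$ demanded by the definition. Since the collection $P_1,\dots,P_m$ was arbitrary, this shows that $c$ is an $\alpha^{2k}$-composable core-set for determinant maximization. The only point that warrants care — rather than a genuine obstacle — is the alignment of quantifiers: the hypothesis is phrased for the mapping $c$ (hence for all point sets simultaneously), so one must verify it instantiates correctly to the per-part guarantee for the specific arbitrary collection passed into the lemma. Once this instantiation is spelled out, the corollary is immediate.
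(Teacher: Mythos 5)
Your proposal is correct and matches the paper's intent exactly: the corollary is stated as an immediate consequence of the preceding lemma, obtained by instantiating the mapping-level hypothesis to each part $P_i$ and unwinding the definition of an $\alpha^{2k}$-composable core-set. The quantifier bookkeeping you spell out is precisely what the paper leaves implicit.
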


We employ the above corollary to analyze both greedy and local search algorithms in Sections \ref{sec:local-search} and \ref{sec:greedy}.

\section{The Local Search Algorithm}\label{sec:local-search}
\label{sec:localsearch}
In this section, we describe and analyze the local search algorithm and prove \autoref{thm:mainthm2}. 
The algorithm  is  described in Algorithm \ref{alg:LS}. 

To prove \autoref{thm:mainthm2}, we follow a two steps strategy.  We first analyze the algorithm for individual point sets, and show that the output is a $(2k)$-approximate core-set for the $k$-directional height problem, as follows.
\begin{lemma}\label{lem:lcs-coreset}
	Let $\pntset$ be a set of points and $c(P)$ be the result of running the local search algorithm on $\pntset$. Then, for any $\h \in \mathcal{H}_{k-1}$, 
\[
h(c(P),\h)\geq \frac{h(P,\h)}{2k(1+\epsilon)}.
\]
\end{lemma}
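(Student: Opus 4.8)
The plan is to fix an arbitrary $\h \in \mathcal{H}_{k-1}$ together with a point $p^* \in P$ realizing $h(P,\h) = \dist(p^*,\h) =: h^*$, and to show $h(c(P),\h) \geq h^*/(2k(1+\epsilon))$. Write $C = c(P) = \{c_1,\dots,c_k\}$, let $V = \langle C\rangle$, and for each $i$ let $\h_i = \langle C - c_i\rangle \in \mathcal{H}_{k-1}$ be the span of the other core-set points. I may assume $\Vol(C) > 0$ (otherwise the instance is degenerate) and that $p^* \notin C$ (otherwise the claim is immediate). Setting $t := h(C,\h) = \max_i \dist(c_i,\h)$, the goal reduces to proving $h^* \leq 2k(1+\epsilon)\,t$. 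The only tool available is local optimality of $C$: since swapping $p^*$ for any $c_i$ did not increase the volume by more than a $(1+\epsilon)$ factor, $\Vol(C - c_i + p^*) \leq (1+\epsilon)\Vol(C)$, which after dividing by $\Vol(C - c_i)$ becomes
\[
\dist(p^*,\h_i) \;\leq\; (1+\epsilon)\,\dist(c_i,\h_i) \qquad \text{for every } i.
\]

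The first step is to split $h^*$ with the triangle inequality into a component lying inside $V$ and a component orthogonal to $V$. Writing $\bar p = \mathrm{proj}_V(p^*)$, I get
\[
h^* = \dist(p^*,\h) \;\leq\; \dist(p^*,V) + \dist(\bar p,\h),
\]
and I will bound each of the two terms by $k(1+\epsilon)\,t$. For the in-$V$ term, expand $\bar p = \sum_i \lambda_i c_i$ in the basis $C$. By Cramer's rule each coefficient is a volume ratio, $|\lambda_i| = \Vol(C - c_i + \bar p)/\Vol(C) = \dist(\bar p,\h_i)/\dist(c_i,\h_i)$; since $\h_i \subseteq V$ gives $\dist(\bar p,\h_i) \leq \dist(p^*,\h_i)$, local optimality yields $|\lambda_i| \leq 1+\epsilon$. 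Because $\dist(\cdot,\h) = \|\mathrm{proj}_{\h^\perp}(\cdot)\|$ is a seminorm, this gives $\dist(\bar p,\h) \leq \sum_i |\lambda_i|\,\dist(c_i,\h) \leq (1+\epsilon)\,k\,t$.

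The orthogonal term $\dist(p^*,V)$ is the crux, and is where I expect the real work. Local optimality only bounds it by $(1+\epsilon)\min_i \dist(c_i,\h_i)$ (using $\dist(p^*,V) \leq \dist(p^*,\h_i)$, valid since $\h_i \subseteq V$), so I must exhibit some $c_i$ whose height over the others is small, namely $\min_i \dist(c_i,\h_i) \leq k\,t$. For this I will use that the $k$ projected points $\mathrm{proj}_\h(c_i)$ all lie in the $(k-1)$-dimensional space $\h$ and are therefore linearly dependent: $\sum_i \mu_i\,\mathrm{proj}_\h(c_i) = 0$ with $\max_i|\mu_i| = 1$, attained at some index $i_0$. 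Then $\sum_i \mu_i c_i = \sum_i \mu_i\bigl(c_i - \mathrm{proj}_\h(c_i)\bigr)$ has norm at most $\sum_i |\mu_i|\,\dist(c_i,\h) \leq k\,t$, and since $\sum_{i\neq i_0}(\mu_i/\mu_{i_0})\,c_i \in \h_{i_0}$, this forces $\dist(c_{i_0},\h_{i_0}) \leq \|\sum_i \mu_i c_i\| \leq k\,t$. Hence $\dist(p^*,V) \leq (1+\epsilon)\,k\,t$ as well, and combining the two terms gives $h^* \leq 2k(1+\epsilon)\,t$, which is the lemma. The main obstacle is exactly this control of $\dist(p^*,V)$: local search provides no direct handle on how far $p^*$ protrudes from $\langle C\rangle$, and it is the linear-dependence trick, turning "one extra point over a $(k-1)$-dimensional target" into the factor $k$, that closes the gap.
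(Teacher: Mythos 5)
Your proof is correct and follows essentially the same route as the paper: decompose $\dist(p^*,\h)$ via the projection onto $\langle c(P)\rangle$ into an orthogonal and an in-span component, bound each by $k(1+\epsilon)\,h(C,\h)$ using the local-optimality inequality $\dist(p^*,\h_i)\le(1+\epsilon)\dist(c_i,\h_i)$ together with the linear dependence of the $k$ projected core-set points inside the $(k-1)$-dimensional $\h$. The only (cosmetic) difference is in the orthogonal term, where you normalize a linear dependence by its largest coefficient to exhibit an index with $\dist(c_{i_0},\h_{i_0})\le k\,h(C,\h)$, whereas the paper obtains coefficients of magnitude at most $1$ via a maximum-volume independent subset of the projections; the two devices are interchangeable here.
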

Next, we apply \autoref{cor:dirtodet}, which implies that local search gives $(2k(1+\epsilon))^{2k}$-composable core-sets for the  determinant maximization. Clearly this completes the proof of the theorem by setting $\epsilon$ to a constant.
 
So proving \autoref{thm:mainthm2} boils down to showing \autoref{lem:lcs-coreset}. Before, getting into that, we analyze the running time, and present some remarks about the implementation.

\paragraph{Running time.} Let $\mathcal{C}_0$ be the output of the greedy. By \autoref{thm:greedyoffline}  $\frac{\Vol(\mathcal{C}_0)}{\text{MAXVOL}_k(P)} \geq \frac{1}{k!}$. The algorithm starts with $\mathcal{C}_0$ and by definition, in any iteration increases the volume by a factor of at least $1+\epsilon$, hence the total number of iterations is $O(\frac{k\log k}{\log(1+\epsilon)})$. Finally, each iteration can be naively executed by iterating over all points in $P$, forming the corresponding $k\times k$ matrix, and computing the determinant in total time $O(|P|\cdot kd\cdot k^3|P|)$.
  
We also remark that unlike the algorithm in \cite{indyk2018composable}, our method can also be executed without any changes and additional complexity, when the point set $P$ is not explicitly given in the input; instead, it is presented by an oracle  that given two points of $P$ returns their inner product. One can note that in this case the algorithm can be simulated by querying this oracle for at most $O(|P|k)$ times.


\begin{algorithm}[tb]
	\caption{Local Search Algorithm}
	\label{alg:LS}
	\begin{algorithmic}
		\STATE {\bfseries Input:} A point set $P\subset \Re^d$, integer $k$, and  $\epsilon>0$.
		\STATE {\bfseries Output:} A set $\mathcal{C} \subset P$ of size $k$.
		\STATE Initialize $\mathcal{C}=\emptyset$. 
		\FOR{$i=1$ {\bfseries to} $k$}
		\STATE Add $\text{argmax}_{p \in P\setminus{\mathcal{C}}} \Vol(\mathcal{C}+ p)$ to $\mathcal{C}$.
		\ENDFOR
		\REPEAT
		\STATE If there are points $q \in P\setminus{C}$ and $p \in \mathcal{C}$ such that
		\[\Vol(\mathcal{C}+q-p) \geq (1+\epsilon)\Vol(\mathcal{C})\] replace $p$ with $q$.
		\UNTIL{No such pair exists.}
		\STATE {\bfseries Return} $\mathcal{C}$.
	\end{algorithmic}
\end{algorithm}
\vspace{1mm}


\subsection{Proof of \autoref{lem:lcs-coreset}}
With no loss of generality, suppose that $\epsilon=0$, the proof automatically extends to $\epsilon\neq 0$. Therefore, $c(P)$ has the following property: for any $v \in P\setminus c(P)$ and $u \in c(P)$, $\Vol(c(P))\geq \Vol(c(P)-u+v)$. Fix $\h \in \h_{k-1}$, and let $p= \text{argmax}_{p\in P} \dist(p,\h)$. Our goal is to show there exists $q \in c(P)$ so that $\dist(q,\mathcal{H})\geq \frac{\dist(p,\mathcal{H})}{2k}$. 

Let $\ha=\langle c(P) \rangle $ be  the $k$-dimensional linear subspace spanned by the set of points in the core-set, and let $p_\ha$ be the projection of $p$ onto this subspace.
We proceed with proof by contradiction. Set $\dist(p,\mathcal{H})=2x$, and suppose to the contrary that for any $q \in c(P)$, $\dist(q,\h) < \frac{x}{k}$.
  With this assumption, we prove the two following lemmas.
 
 \begin{lemma}\label{lem:lcs-first}
 	$\dist(p,p_{\ha})<x$.
 \end{lemma}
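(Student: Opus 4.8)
The plan is to prove \autoref{lem:lcs-first} by contradiction, turning the assumption $\dist(p,p_\ha)\ge x$ into an improving swap and thereby contradicting the local optimality of $c(P)$. Write $c(P)=\{u_1,\dots,u_k\}$ and, for each $i$, let $H_i=\langle c(P)-u_i\rangle$ be the span of the other $k-1$ core-set points. I may assume $p\notin c(P)$: if $p\in c(P)$ then $p\in\ha$, so $p_\ha=p$ and $\dist(p,p_\ha)=0<x$ already. Recall that $\dist(p,p_\ha)=\dist(p,\ha)$, since $p_\ha$ is the orthogonal projection of $p$ onto $\ha$. The basic tool is the identity $\Vol(c(P)-u_i+v)=\dist(v,H_i)\cdot\Vol(c(P)-u_i)$ (and likewise with $v=u_i$), which converts the algorithm's stopping condition $\Vol(c(P))\ge\Vol(c(P)-u_i+v)$ into the height comparison $\dist(u_i,H_i)\ge\dist(v,H_i)$ for every $v\in P\setminus c(P)$.

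The heart of the argument is to locate one core-set point that is almost spanned by the others, with defect at most $x$. For this I use that the core-set points nearly lie in $\h$: write $u_i=\bar u_i+u_i^{\perp}$, where $\bar u_i$ is the orthogonal projection of $u_i$ onto $\h$ and $u_i^{\perp}$ its perpendicular part, so $\|u_i^{\perp}\|=\dist(u_i,\h)<x/k$ by the standing contradiction hypothesis. The $k$ vectors $\bar u_1,\dots,\bar u_k$ lie in the $(k-1)$-dimensional space $\h$, hence are linearly dependent: there are scalars $\lambda_i$, not all zero and normalized so that $\max_i|\lambda_i|=1$, with $\sum_i\lambda_i\bar u_i=0$. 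Then $w:=\sum_i\lambda_i u_i=\sum_i\lambda_i u_i^{\perp}$, so $\|w\|\le\sum_i|\lambda_i|\,\|u_i^{\perp}\|<k\cdot(x/k)=x$. Choosing $i^\ast$ with $|\lambda_{i^\ast}|=1$ and solving $\sum_i\lambda_i u_i=w$ for $u_{i^\ast}$ expresses $u_{i^\ast}$ as $w/\lambda_{i^\ast}$ plus a vector of $H_{i^\ast}$; hence $\dist(u_{i^\ast},H_{i^\ast})\le\|w/\lambda_{i^\ast}\|=\|w\|<x$.

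To finish, I combine this with the assumption. Since $H_{i^\ast}\subseteq\ha$, distance to the smaller subspace is at least distance to $\ha$, so $\dist(p,H_{i^\ast})\ge\dist(p,\ha)=\dist(p,p_\ha)\ge x>\dist(u_{i^\ast},H_{i^\ast})$. By the height comparison above, this yields $\Vol(c(P)-u_{i^\ast}+p)>\Vol(c(P))$, an improving swap, contradicting local optimality; therefore $\dist(p,p_\ha)<x$. I expect the middle paragraph to be the main obstacle: everything rests on finding a nearly redundant core-set direction and controlling its defect by exactly $x$, which is precisely where the $1/k$ slack built into the hypothesis $\dist(q,\h)<x/k$ is consumed when summing the $k$ perpendicular components.
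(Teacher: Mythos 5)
Your proof is correct, and its overall skeleton matches the paper's: both arguments exploit the fact that the $k$ projections of the core-set points onto $\h$ are linearly dependent (they live in a $(k-1)$-dimensional space), extract from that dependency an expression of one core-set point as a combination of the others with all coefficients bounded by $1$ in absolute value, bound the resulting ``defect'' by $k\cdot(x/k)=x$ using the standing hypothesis $\dist(q,\h)<x/k$, and then invoke the local-search stopping condition in its base-times-height form. The one place you genuinely diverge is how the bounded coefficients are obtained: the paper selects a maximum-volume independent subset of the projections and runs an exchange argument (its Claim~\ref{clm:lcs-maxvol}) to force $\abs{\alpha_i}\le 1$, whereas you simply take any nontrivial dependency $\sum_i\lambda_i\bar u_i=0$, normalize so that $\max_i\abs{\lambda_i}=1$, and solve for the index attaining the maximum. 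Your version is more elementary and avoids Claim~\ref{clm:lcs-maxvol} entirely; the paper's direct chain $\dist(p,\ha)\le\dist(p,\ha_{\bar j})\le\dist(q_j,\ha_{\bar j})\le\dist(q_j,q)<x$ versus your contradiction-via-improving-swap is just a contrapositive rephrasing of the same final step. One shared caveat: both arguments implicitly need $\Vol(c(P)-u_{i^\ast})>0$ for the volume inequality to translate into a height comparison; the paper dispatches this by noting (in the proof of \autoref{lem:lcs-second}) that the core-set points are independent unless $P$ has rank below $k$, and your write-up should make the same remark.
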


\begin{lemma}\label{lem:lcs-second}
	$\dist(p_{\ha},\h)< x$.
\end{lemma}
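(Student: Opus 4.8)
The plan is to expand $p_{\ha}$ in the basis given by the core-set points and control its distance to $\h$ through the (assumed small) distances of those points. Write $c(P)=\{c_1,\dots,c_k\}$; since the local-search output has positive volume these are linearly independent and hence a basis of the $k$-dimensional subspace $\ha$, so $p_{\ha}=\sum_{i=1}^{k}\alpha_i c_i$ for unique scalars $\alpha_i$. Orthogonal projection onto $\h$ is linear, so $p_{\ha}-\Pi_{\h}(p_{\ha})=\sum_i \alpha_i\bigl(c_i-\Pi_{\h}(c_i)\bigr)$; the triangle inequality and the contradiction hypothesis $\dist(c_i,\h)<x/k$ then give
\[
\dist(p_{\ha},\h)\;\le\;\sum_{i=1}^{k}|\alpha_i|\,\dist(c_i,\h)\;<\;\frac{x}{k}\sum_{i=1}^{k}|\alpha_i|.
\]
Thus it suffices to show $\sum_i|\alpha_i|\le k$, and I would do so via the stronger pointwise bound $|\alpha_i|\le 1$.

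To bound a single coefficient, I would invoke local optimality. By multilinearity of the determinant (Cramer's rule), replacing $c_i$ by $p_{\ha}=\sum_j\alpha_j c_j$ in the volume and using that all the other $c_j$ are already present shows
\[
|\alpha_i|=\frac{\Vol(c(P)-c_i+p_{\ha})}{\Vol(c(P))}.
\]
Now $p_{\ha}$ is the orthogonal projection of $p$ onto $\ha$, and every $c_j$ lies in $\ha$; the component $p-p_{\ha}$ is orthogonal to $\ha$ and therefore only increases the height of the parallelepiped over the base $\langle c_j:j\ne i\rangle$, so $\Vol(c(P)-c_i+p_{\ha})\le \Vol(c(P)-c_i+p)$. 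Finally $p\in P\setminus c(P)$—otherwise $p\in c(P)$ would force $\dist(p,\h)=2x<x/k$, which is false—so local optimality of $c(P)$ yields $\Vol(c(P)-c_i+p)\le \Vol(c(P))$. Chaining the three relations gives $|\alpha_i|\le 1$, and hence
\[
\dist(p_{\ha},\h)<\frac{x}{k}\sum_{i=1}^{k}|\alpha_i|\le \frac{x}{k}\cdot k=x.
\]

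I expect the coefficient bound $|\alpha_i|\le 1$ to be the only real obstacle, since it is the single place the local-search stopping condition enters. Its two supporting steps must both be handled with care: translating the expansion coefficient into a ratio of volumes via the determinant's multilinearity, and the geometric observation that projecting $p$ onto $\ha$ can only shrink the swapped-in volume, which is what lets the swap inequality for $p$ itself control the volume for $p_{\ha}$. The remaining ingredients—linearity of the projection, the triangle inequality, and the final arithmetic—are routine.
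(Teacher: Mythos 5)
Your proof is correct and follows essentially the same route as the paper: you expand $p_{\ha}$ in the basis $\{c_1,\dots,c_k\}$, bound each coefficient by $1$ using the chain ``projection onto $\ha$ does not increase the swapped-in volume'' followed by local optimality, and finish with the triangle inequality. The only cosmetic difference is that you express the coefficient bound as a ratio of volumes via Cramer's rule, whereas the paper writes the same two inequalities as $\dist(p_{\ha},\ha_{\bar i})\le\dist(p,\ha_{\bar i})\le\dist(q_i,\ha_{\bar i})$; these are equivalent after dividing by $\Vol(c(P)-c_i)$.
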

One can note that, combining the above two lemmas and applying the triangle inequality implies 
$\dist(p,\h)\leq \dist(p,p_{\ha})+ \dist(p_{\ha},\h)<2x$, which contradicts the assumption $\dist(p,\h)=2x$ and completes the proof.  

Therefore, it only remains to prove the above lemmas. Let us first fix some notation. 
Let $c(P)=\{q_1,\dots,q_k\}$ and for any $i$, let $\ha_{\bar i}$ denote the $(k-1)$-dimensional subspace spanned by points in $c(P)\setminus \Set{q_i}$. 
 

\begin{proofof}{\autoref{lem:lcs-first}}
For $1\leq i \leq k$, let $q_i'$ be the projection of $q_i$ onto $\h$. 
We prove that there exists an index $j\leq k$ such that we can write $q_j' = \sum_{i\neq j} \alpha_i q_i'$ where every $\alpha_i\leq 1$.  Let $r$ be the rank, i.e., maximum number of independent points of $\cs'  = \Set{q_i'|i\leq k}$ and clearly as $\h$ has dimension $k-1$, we have $r\leq k-1$. 
Take a subset $S\subset \cs'$ of $r$ independent points that have the maximum volume and let $q_j'$ be a point in $\cs'\setminus S$ and note that this point should exist as there are $k$ points in the core-set. Thus we can write $q_j' =\sum_{i: q_i'\in S} \alpha_i q_i'$. With an idea similar to the one presented in \cite{cm-smvsm-09}, we can prove that the following claim holds.
\begin{claim}\label{clm:lcs-maxvol}
For any $i$ such that $q_i'\in S$, we have $\abs{\alpha_i} \leq 1$.
\end{claim}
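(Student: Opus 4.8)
The plan is to read off $\abs{\alpha_i}$ as a ratio of two volumes and then invoke the maximality of $S$. The guiding identity is that replacing one basis vector of an independent set by a vector expressed in that basis scales the spanned volume by exactly the absolute value of the coefficient of the replaced vector. Concretely, since $S$ is a maximal independent subset of $\cs'$ and $q_j'\in\cs'\subseteq\langle S\rangle$, the expansion $q_j'=\sum_{i:\,q_i'\in S}\alpha_i q_i'$ is well defined, and I would first establish that for each $i$ with $q_i'\in S$,
\[
\Vol(S-q_i'+q_j') = \abs{\alpha_i}\cdot\Vol(S).
\]

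To prove this identity I would fix an orthonormal basis of the $r$-dimensional subspace $\langle S\rangle$ and express the $r$-volumes as absolute values of $r\times r$ determinants in these coordinates. Replacing the row corresponding to $q_i'$ by the row corresponding to $q_j'=\sum_{\ell}\alpha_\ell q_\ell'$ and using multilinearity of the determinant in that single row, every term with $\ell\neq i$ contributes a determinant containing a repeated row (namely $q_\ell'$, which is still present), hence vanishes; only the $\ell=i$ term survives and gives a determinant equal to $\alpha_i$ times the original. Taking absolute values yields the displayed identity. The one point requiring care is the degenerate case in which $S-q_i'+q_j'$ fails to be independent, but then its volume is simply $0$ and the identity still holds (with the understanding that $\alpha_i=0$ in that coordinate).

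Finally I would close the argument by comparison. The set $S-q_i'+q_j'$ is again an $r$-element subset of $\cs'$, so by the choice of $S$ as a maximum-volume independent $r$-subset we have $\Vol(S-q_i'+q_j')\le\Vol(S)$. Combining this with the identity above and dividing by $\Vol(S)>0$ (which is positive precisely because $S$ is independent) yields $\abs{\alpha_i}\le 1$, as claimed. I expect the main obstacle to be the clean derivation of the volume-scaling identity — in particular justifying the reduction to an orthonormal basis of $\langle S\rangle$ and handling the rank-deficient swap — rather than the maximality comparison, which becomes immediate once the identity is in hand.
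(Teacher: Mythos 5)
Your proposal is correct and follows essentially the same route as the paper: both arguments observe that swapping $q_i'$ for $q_j'$ rescales $\Vol(S)$ by exactly $\abs{\alpha_i}$ and then invoke the maximality of $S$ among independent $r$-subsets of $\cs'$. The only difference is cosmetic — you justify the scaling identity via multilinearity of the determinant in one row, whereas the paper phrases it geometrically as a ratio of distances to the subspace $\langle S - q_i'\rangle$ — and your direct comparison replaces the paper's proof by contradiction.
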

\begin{proof}
We prove that if the claim is not true, then $S\setminus \Set{q_i'} \cup \Set{q_j'}$ has a larger volume than $S$ which contradicts the choice of $S$. Let $\mathcal{F}$  be the linear subspace passing through $S\setminus \Set{q_i'}$. It is easy to see that $\frac{\vol(S)}{\vol(S\setminus \Set{q_i'} \cup \Set{q_j'})} = \frac{\dist(q_i',\mathcal{F})}{\dist(q_j',\mathcal{F})}$, meaning that $\dist(q_i',\mathcal{F}) \geq \dist(q_j',\mathcal{F})$. However, if $\abs{\alpha_i}>1$ then since $q_i'$ is the only point in $S$ which in not in $\mathcal{F}$, then $\dist(q_j',\mathcal{F}) \geq \dist(q_i',\mathcal{F})$ which is a contradiction.
\end{proof}

Finally, for any $q_i' \notin S$, set the corresponding coefficient $\alpha_i = 0$. So we get that $q_j' = \sum_{i\neq j} \alpha_i q_i'$ where every $\abs{\alpha_i}\leq 1$.

Now take the point $q=\sum_{i\neq j} \alpha_i q_i$. 
Note that,  $q_j'$ is in fact the projection of $q$ onto $\h$. Therefore,  
 using triangle inequality, we have 
\begin{equation}
\begin{aligned}\label{eq:lcs-qkq}
\dist(q_j',q) = \dist(q,\h) \leq \sum_{i\neq j} \abs{\alpha_i} \dist(q_i,\h)  \leq \frac{(k-1) x}{k}
\end{aligned}
\end{equation}
Then we get that
\[
\begin{split}
&\dist(p,p_{\ha}) =\dist(p,\ha) \\
&\leq \dist(p,\ha_{\bar j}) \quad \explain{as $\ha_{\bar j}\subset \ha$} \\
& \leq \dist(q_j,\ha_{\bar j}) \quad \explain{by the local search property} \\
& \leq \dist(q_j,q) \quad \explain{as $q \in \ha_{\bar j}$} \\
& \leq \dist(q_j,q_j') + \dist(q_j',q) \quad \explain{by triangle inequality} \\
& < x/k + (k-1) x/k   \quad \explain{by our assumption and Equation \ref{eq:lcs-qkq}}\\
&= x
\end{split}
\] \end{proofof}

\begin{proofof}{\autoref{lem:lcs-second}}
Again we prove that we can write $p_{\ha}=\sum_{i=1}^k \alpha_i q_i$ where all $\abs{\alpha_i} \leq 1$. We assume that the set of points $q_i$ are linearly independent, otherwise the points in $\pntset$ have rank less than $k$ and thus the volume is $0$.
Therefore, we can write $p_{\ha} = \sum_{i=1}^k \alpha_i q_i$. Note that for any $i$, we have
\begin{align*}
\dist(p_{\ha},\ha_{\bar i}) &\leq \dist(p,\ha_{\bar i}) \\
& \leq \dist(q_i,\ha_{\bar i}) \quad \explain{by the local search property} 
\end{align*}
where the first inequality follows since $\ha_{\bar i}$ is a subspace of $\ha$ and $p_{\ha}$ is the projection of $p$ onto $\ha$. Again, similar to the proof of Claim \ref{clm:lcs-maxvol}, this means that $\abs{\alpha_i}\leq 1$.
Therefore, using triangle inequality
\[
\begin{split}
\dist(p_{\ha},\h) &= \dist(\sum_{i=1}^k \alpha_i q_i, \h) \leq \sum_{i=1}^k \abs{\alpha_i}\dist(q_i,\h) \\
&< k \times x/k = x.
\end{split}
\]
\end{proofof}

\section{The Greedy Algorithm}\label{sec:greedy}
In this section we analyze the performance of the greedy algorithm (see \autoref{sec:prelim}) as a composable core-set function for the determinant maximization and prove \autoref{thm:mainthm1}. Our proof plan is similar to the to the analysis of the local search. We analyze the guarantee of the greedy as a core-set mapping for $k$-directional height, and  combining that with \autoref{cor:dirtodet} we achieve the result. We prove the following.

\begin{lemma}\label{lem:greedy-coreset}
	Let $\pntset$ be an arbitrary point set and $c(P)$ denote the output of running greedy on $\pntset$.  Then, $c(P)$ is a $(2k)\cdot 3^k$-approximate core-set for the $k$-directional height of $\pntset$, i.e. for any $\h \in \h_{k-1}$ we have 
	$$h(c(P),\h) \geq \frac{1}{2k\cdot 3^k}\cdot h(P,\h)$$
\end{lemma}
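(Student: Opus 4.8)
The plan is to prove this directional-height guarantee directly and then invoke \autoref{cor:dirtodet}, exactly as in the local-search analysis; the statement above \emph{is} the directional-height core-set bound, so no further reduction is needed. Fix $\h\in\h_{k-1}$ and let $p\in P$ attain $h(P,\h)=:H$. Write the greedy output in the order it is produced, $c(P)=\{q_1,\dots,q_k\}$, put $\ha_i=\langle q_1,\dots,q_i\rangle$ and $\ha=\ha_k=\langle c(P)\rangle$, and let $g_i=\dist(q_i,\ha_{i-1})$ be the greedy ``heights''. Let $e_1,\dots,e_k$ be the Gram--Schmidt orthonormal basis of $\ha$ adapted to this order (so $e_i\perp\ha_{i-1}$ and $\langle q_i,e_i\rangle=g_i$), and let $f_i$ denote the component of $e_i$ orthogonal to $\h$, so that $\dist(e_i,\h)=\|f_i\|$. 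I argue by contradiction, assuming every core-set point is close to $\h$: $\dist(q_i,\h)<\delta$ for all $i$, where $\delta=H/(2k\cdot 3^k)$.

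Two estimates drive the proof. First (the analogue of \autoref{lem:lcs-second}, and the source of the exponential factor), the greedy rule gives $g_i=\max_{v\in P}\dist(v,\ha_{i-1})$, so $g_i\ge\dist(v,\ha_{i-1})$ for every $v\in P$; in particular $|\langle q_i,e_l\rangle|\le g_l$ for all $l\le i$, since $\langle q_i,e_l\rangle$ is a coordinate of $q_i$ orthogonal to $\ha_{l-1}$. Expanding $q_i=\sum_{l\le i}\langle q_i,e_l\rangle e_l$ and taking components orthogonal to $\h$, the term $l=i$ isolates $g_i f_i$, whence $\|g_i f_i\|\le\dist(q_i,\h)+\sum_{l<i}|\langle q_i,e_l\rangle|\,\|f_l\|<\delta+\sum_{l<i}g_l\|f_l\|$. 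This recursion yields $g_i\|f_i\|<2^{\,i-1}\delta$, so $\sum_i g_i\|f_i\|<2^k\delta$. Second, I need a matching \emph{lower} bound on the heights: since $|\langle p,e_i\rangle|\le\dist(p,\ha_{i-1})\le g_i$, the $\h^\perp$-component of the projection of $p$ onto $\ha$ has norm $<\sum_i g_i\|f_i\|<2^k\delta$; as the $\h^\perp$-component of $p$ itself has norm $H$, the part of $p$ orthogonal to $\ha$ has $\h^\perp$-component of norm $>H-2^k\delta$, so $\dist(p,\ha)\ge H-2^k\delta$. Because $g_i\ge\dist(p,\ha_{i-1})\ge\dist(p,\ha)$, every height satisfies $g_i>G:=H-2^k\delta$.

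The key new ingredient---replacing \autoref{lem:lcs-first}, which has no greedy analogue---is a dimension count. Since $\dim\ha=k$ while $\dim\h=k-1$, we have $\dim(\ha\cap\h^\perp)\ge k+(d-k+1)-d=1$, so $\ha$ contains a unit vector $w$ orthogonal to $\h$. Writing $w=\sum_i\gamma_i e_i$ with $\sum_i\gamma_i^2=1$ and projecting onto $\h^\perp$ gives $1=\dist(w,\h)=\|\sum_i\gamma_i f_i\|\le(\sum_i\|f_i\|^2)^{1/2}$, i.e.\ $\sum_i\dist(e_i,\h)^2\ge1$: some Gram--Schmidt direction must genuinely stick out of $\h$. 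But the two estimates force $\|f_i\|<2^{\,i-1}\delta/g_i<2^{\,i-1}\delta/G$, so $\sum_i\|f_i\|^2<(\delta/G)^2\sum_i 4^{\,i-1}<(2^k\delta/G)^2$. Combined with $\sum_i\|f_i\|^2\ge1$ this yields $G<2^k\delta$, i.e.\ $H<2^{k+1}\delta$, contradicting $\delta=H/(2k\cdot3^k)$ since $2^{k+1}\le 2k\cdot3^k$. (This in fact proves the stronger bound $h(c(P),\h)\ge H/2^{k+1}$, so the stated factor leaves ample slack.)

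The main obstacle is precisely that the local-search proof does not transfer: for the greedy core-set the farthest point $p$ is typically \emph{far} from $\ha=\langle c(P)\rangle$ (the second estimate shows $\dist(p,\ha)\ge H-2^k\delta$), so one cannot recover $p$ as a bounded combination of the $q_i$ as in \autoref{lem:lcs-first}. The resolution is to stop approximating $p$ and instead play the dimension gap $\dim\ha>\dim\h$ against the greedy ordering: the ordering both controls the Gram--Schmidt coefficients (giving the geometric, $3^k$-type blow-up in the first estimate) and keeps all heights above $G$, while the dimension gap forces some orthonormal direction of $\ha$ to lie far from $\h$---impossible once every point is within $\delta$ of $\h$. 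One routine case must be dispatched separately: if $\langle P\rangle$ has dimension $<k$ then $\ha\cap\h^\perp$ may be trivial and the counting step fails; there, however, greedy returns a set spanning $\langle P\rangle$, and expanding $p$ through a maximum-volume subset of $c(P)$ that spans $\langle P\rangle$, with coefficients bounded by $1$ as in \autoref{clm:lcs-maxvol}, gives $h(c(P),\h)\ge H/k$, an even smaller factor.
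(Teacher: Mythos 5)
Your proof is correct, and it takes a genuinely different route from the paper's. The paper bounds $\dist(p,p_\ha)$ and $\dist(p_\ha,\h)$ separately by writing projections of points as linear combinations of the greedy vectors $q_1,\dots,q_t$ with coefficients controlled inductively ($|\alpha_i|\le 2^{t-1}$ for the projection onto $\ha_t$, and $|\alpha_i|\le 3^t$ for the projections onto $\h$), and it invokes the dimension count in the form ``$k$ projected points $q_1',\dots,q_k'$ must be affinely dependent in the $(k-1)$-dimensional $\h$, so some $q_{t+1}'$ lies in $\langle q_1',\dots,q_t'\rangle$.'' You instead work in the Gram--Schmidt orthonormal basis of $\ha$ adapted to the greedy order: the recursion $g_i\|f_i\|<2^{i-1}\delta$ plays the role of the paper's coefficient-growth claims, the lower bound $g_i\ge\dist(p,\ha)\ge H-2^k\delta$ replaces the paper's monotonicity chain $\dist(q_{t+1},\ha_t)\ge\dist(p,\ha_{k-1})$, and the dimension count is used in the cleaner form $\dim(\ha\cap\h^\perp)\ge1$, which via Cauchy--Schwarz forces $\sum_i\|f_i\|^2\ge1$. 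This buys you a sharper conclusion ($h(c(P),\h)\ge h(P,\h)/2^{k+1}$ rather than $h(P,\h)/(2k\cdot3^k)$), which would improve the constant in \autoref{thm:mainthm1} while of course still being $O(C^{k^2})$ after applying \autoref{cor:dirtodet}. One small repair is needed in your rank-deficient case: the ``coefficients bounded by $1$'' claim does not follow from \autoref{clm:lcs-maxvol}, since there the maximum-volume subset is chosen from a set containing the point being expanded, whereas here $p\notin c(P)$ and greedy gives no such exchange guarantee (the correct coefficient bound for a greedy basis is $2^{k-1}$, as in \autoref{claim:greedycoreset}). But this case is in fact already covered by your own machinery: when $\dim\langle P\rangle=r<k$ one has $p=p_\ha$, and your first estimate together with $|\langle p,e_i\rangle|\le g_i$ gives $\dist(p,\h)\le\sum_{i\le r}g_i\|f_i\|<2^k\delta<H$, a contradiction, so no separate max-volume argument is required.
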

So the greedy is  a $(2k\cdot 3^k)$-approximate core-set for $k$-directional height problem. Combining with \autoref{cor:dirtodet}, we conclude it is also a $(2k\cdot 3^k)^{2k}$ composable core-set for the determinant maximization which proves \autoref{thm:mainthm1}. 

\subsection{Proof of \autoref{lem:greedy-coreset}}
 The proof is similar to the proof of \autoref{lem:lcs-coreset}. 
Let $\ha=\langle c(P) \rangle$ be the $k$-dimensional subspace spanned by the output of greedy. Also for a point $p \in \pntset$, define $p_\ha$ to be its projection onto $\ha$. 
Fix $\h \in \h_{k-1}$, let 
$h(c(P),\h)=\frac{x}{k}$ for some number $x$, which in particular implies that for any $q \in c(P)$, $\dist(q,\h)\leq \frac{x}{k}$. Then, our goal is to prove $h(\pntset,\h)\leq 2\cdot 3^k\cdot x$.  We show that by proving  the following two lemmas. %
 
 \begin{lemma}\label{lem:d-p'-A}
 	For any $p \in P$,  $\dist(p_{\ha},\h)\leq 2^{k-1} x$.
 \end{lemma}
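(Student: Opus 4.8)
The plan is to mimic the structure of the proof of Lemma~\ref{lem:lcs-second}: express the projection $p_\ha$ as a linear combination $p_\ha=\sum_{i=1}^k\alpha_i q_i$ of the greedy points (this is possible because $p_\ha\in\ha=\langle q_1,\dots,q_k\rangle$; if the $q_i$ are not independent the volume is zero and there is nothing to prove), and then use
\[
\dist(p_\ha,\h)\;\le\;\sum_{i=1}^k|\alpha_i|\,\dist(q_i,\h)\;\le\;\Big(\sum_{i=1}^k|\alpha_i|\Big)\cdot\frac{x}{k},
\]
where $\dist(q_i,\h)\le x/k$ by hypothesis. It therefore suffices to bound $\sum_i|\alpha_i|\le k\,2^{k-1}$; in fact I expect the stronger pointwise bound $|\alpha_i|\le 2^{\,k-i}$. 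The decisive difference from the local search case is that here we only have the greedy \emph{ordering} property rather than the swap property, so the coefficients are no longer bounded by $1$ but grow geometrically, which is exactly where the factor $2^{k-1}$ comes from.

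First I would record the structural facts supplied by greediness. Let $e_1,\dots,e_k$ be the Gram--Schmidt orthonormal basis of $\ha$ in which $e_j$ points along the component of $q_j$ orthogonal to $\langle q_1,\dots,q_{j-1}\rangle$, and set $d_j=\dist(q_j,\langle q_1,\dots,q_{j-1}\rangle)$, so that $\langle q_j,e_j\rangle=d_j$ after fixing signs. Because at step $j$ the greedy rule selects $q_j$ as the farthest point from $\langle q_1,\dots,q_{j-1}\rangle$, every candidate point $v\in P$ (in particular every $q_i$ with $i>j$, and the fixed point $p$) satisfies $\dist(v,\langle q_1,\dots,q_{j-1}\rangle)\le d_j$; since $|\langle v,e_j\rangle|\le\dist(v,\langle q_1,\dots,q_{j-1}\rangle)$, this yields the two key bounds $|\langle q_i,e_j\rangle|\le d_j$ for all $j\le i$ and $|\langle p,e_j\rangle|\le d_j$ for all $j$. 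Thus in the $e$-coordinates the matrix with columns $q_1,\dots,q_k$ is upper triangular with diagonal entries $d_j$ and all entries bounded by the corresponding $d_j$, and the coordinates $\gamma_j:=\langle p_\ha,e_j\rangle=\langle p,e_j\rangle$ of $p_\ha$ obey $|\gamma_j|\le d_j$.

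With this triangular structure in hand I would solve $p_\ha=\sum_i\alpha_i q_i$ by back-substitution from the top coordinate downward, peeling off $q_k,q_{k-1},\dots$ in turn. Writing $p^{(m)}$ for the partial remainder lying in $\langle e_1,\dots,e_m\rangle$ and $\gamma_j^{(m)}$ for its coordinates, the top coefficient is $\alpha_m=\gamma_m^{(m)}/d_m$, and subtracting $\alpha_m q_m$ changes each lower coordinate ($j<m$) by at most $|\alpha_m|\,|\langle q_m,e_j\rangle|\le|\alpha_m|\,d_j$. Starting from $|\gamma_j^{(k)}|\le d_j$, an easy induction shows the coordinate budget at most doubles at each peel, giving $|\gamma_j^{(m)}|\le 2^{\,k-m}d_j$ and hence $|\alpha_m|\le 2^{\,k-m}\le 2^{k-1}$. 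Summing gives $\sum_i|\alpha_i|\le k\,2^{k-1}$ (or, more tightly, $\le 2^k-1$, which already satisfies $(2^k-1)/k\le 2^{k-1}$), and substituting into the first display yields $\dist(p_\ha,\h)\le 2^{k-1}x$, as claimed. The main obstacle is precisely this coefficient bound: unlike local search, where optimality against single swaps forces $|\alpha_i|\le 1$ via $\dist(p,\ha_{\bar i})\le\dist(q_i,\ha_{\bar i})$, greediness only controls distances to the \emph{prefix} spans $\langle q_1,\dots,q_{j-1}\rangle$ and not to the full deleted spans $\ha_{\bar i}$. Converting this weaker, order-dependent control into a bound on the combination coefficients is what forces the recursive peeling and produces the geometric $2^{k-1}$ growth, and getting the doubling bookkeeping right is the part that needs care.
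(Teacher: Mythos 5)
Your proof is correct, and its skeleton matches the paper's: both arguments reduce the lemma to writing $p_{\ha}=\sum_{i=1}^k\alpha_i q_i$ with all $\abs{\alpha_i}\leq 2^{k-1}$ and then applying the triangle inequality together with $\dist(q_i,\h)\leq x/k$. Where you genuinely differ is in how the coefficient bound is established. The paper's \autoref{claim:greedycoreset} runs a forward induction on the prefix length $t$: it writes $p^{t+1}=\alpha_{t+1}q_{t+1}-\alpha_{t+1}q_{t+1}^{t}+p^{t}$ with $\abs{\alpha_{t+1}}\leq 1$ (this is where greediness enters, via $\dist(p,\ha_t)\leq\dist(q_{t+1},\ha_t)$), re-expands $q_{t+1}^t$ and $p^t$ by the induction hypothesis, and concludes that the coefficient budget doubles at each step, giving the uniform bound $2^{k-1}$. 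You instead pass to the Gram--Schmidt basis of the greedy prefixes, note that greediness makes the matrix whose columns are $q_1,\dots,q_k$ upper triangular with every entry of row $j$ (and every coordinate of $p$) bounded in magnitude by the diagonal entry $d_j$, and back-substitute; the same doubling appears, but per index, yielding the sharper bounds $\abs{\alpha_m}\leq 2^{k-m}$ and $\sum_i\abs{\alpha_i}\leq 2^k-1$. These are dual implementations of one mechanism---greedy controls distances to prefix spans, and each elimination step can at most double the remaining budget---but yours is a cleaner linear-algebraic packaging with a marginally better constant, whereas the paper's formulation is the one it reuses almost verbatim (with base $3$ in place of $2$, applied to the projected points $q_i'$) in the proof of \autoref{lem:greedy-d-p-p'}, so it earns its keep there. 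Your dismissal of the degenerate case (linearly dependent $q_i$) is exactly as terse as the paper's own in \autoref{lem:lcs-second}; if you wanted it airtight, observe that if $d_j=0$ then all of $P$ lies in $\ha_{j-1}$, so $p_\ha=p$ and the same back-substitution over the first $j-1$ points gives $\dist(p_\ha,\h)\leq 2^{j-2}x\leq 2^{k-1}x$.
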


\begin{lemma} \label{lem:greedy-d-p-p'}
	For any $p \in \pntset$, $\dist(p,p_{\ha})\leq 3^k x$.
\end{lemma}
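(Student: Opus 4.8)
The goal is to prove \autoref{lem:greedy-d-p-p'}: for any $p \in \pntset$, $\dist(p,p_{\ha}) \leq 3^k x$, where $\ha = \langle c(P)\rangle$ is the $k$-dimensional span of the greedy output and $p_\ha$ is the projection of $p$ onto $\ha$. The quantity $\dist(p,p_{\ha})$ is exactly $\dist(p,\ha)$, the distance of $p$ from the greedy subspace, so what I really want is a bound on how far any point of $P$ can lie from the span of the $k$ greedy picks, in terms of $x$ (recall $x/k$ bounds $\dist(q,\h)$ for each greedy point $q$). The plan is to induct on $k$, exploiting the nested structure of the greedy algorithm: if $c(P) = \{q_1,\dots,q_k\}$ in the order the greedy picks them, then $\{q_1,\dots,q_{k-1}\}$ is precisely the greedy output for a $(k-1)$-sized problem, and its span $\ha_{\bar k}$ sits inside $\ha$.

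The key structural fact to invoke is the greedy selection rule itself. At the step where greedy adds $q_k$, it chose the point maximizing distance to $\langle q_1,\dots,q_{k-1}\rangle = \ha_{\bar k}$; hence for \emph{every} $p\in P$, $\dist(p,\ha_{\bar k}) \leq \dist(q_k,\ha_{\bar k})$. This is the engine that converts "distance to the $(k-1)$-dimensional subspace" into a controlled quantity. First I would establish the base case $k=1$ directly: $\ha$ is the line through $q_1$, and by the greedy rule $q_1$ is the farthest point from the origin, so every $p$ has norm at most $\|q_1\|$ and its distance to the line is bounded accordingly by a constant times $x$. Then, inductively, for a point $p$ I would decompose $\dist(p,\ha)$ using the triangle-type relation through $\ha_{\bar k}$: bound $\dist(p,\ha)$ by combining the distance from $p$ to $\ha_{\bar k}$ (controlled by $\dist(q_k,\ha_{\bar k})$ via the greedy rule) with how $\ha_{\bar k}$ relates to the target $\h$. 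The factor of $3$ per dimension is what I expect to come out of summing a geometric-type error across the recursion, so the $3^k$ bound should emerge from the recurrence of the form $f(k) \leq 3 f(k-1) + (\text{lower order in }x)$ or similar.

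The main technical content is controlling $\dist(q_k,\ha_{\bar k})$, the distance of the last greedy point from the span of the earlier ones, purely in terms of $x$. Here I would use the assumption that all $q_i$ are close to $\h$ (each within $x/k$): since $\h$ is only $(k-1)$-dimensional while $\ha_{\bar k}$ is also $(k-1)$-dimensional and the $q_i$'s hug $\h$, the "height" of $q_k$ above $\ha_{\bar k}$ cannot be much larger than the heights above $\h$, i.e. $O(x)$. Making this precise likely reuses the coefficient-bounding idea from \autoref{clm:lcs-maxvol} / \autoref{lem:lcs-second}: express $q_k$'s projection or the relevant point as a combination $\sum \alpha_i q_i$ with $\abs{\alpha_i}$ bounded, so that distances to $\h$ aggregate with controlled coefficients. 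I would then feed $\dist(q_k,\ha_{\bar k}) \leq O(x)$ (or possibly a bound that itself grows with the recursion) into the inductive step.

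The hard part will be getting the constants to close the induction cleanly and justifying the exponential growth rate of exactly $3^k$ rather than, say, $k!$ or $2^k$. The subtlety is that $\ha_{\bar k}$ is not the same subspace as $\h$, so when I pass from $\dist(p,\ha_{\bar k})$ back to something measured against $\h$ (to use the $x/k$ hypothesis), I incur an error that depends on the angle between $\ha_{\bar k}$ and $\h$, and that angle is itself only controlled inductively. I expect the recursion to produce a bound where each level multiplies the previous by a factor of $3$: one contribution from the current point's displacement and two from re-expressing earlier subspace distances, which is the likely source of the base $3$. I would need to track carefully that the inductive hypothesis is applied to the correct $(k-1)$-dimensional subproblem (the greedy run restricted to the first $k-1$ points, with $\h$ replaced by an appropriate $(k-2)$-dimensional subspace) so the bookkeeping is consistent. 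Once $\dist(p,p_\ha) \leq 3^k x$ and $\dist(p_\ha,\h)\leq 2^{k-1}x$ (\autoref{lem:d-p'-A}) are both in hand, the triangle inequality gives $h(P,\h) = \dist(p,\h) \leq \dist(p,p_\ha) + \dist(p_\ha,\h) \leq 3^k x + 2^{k-1} x \leq 2\cdot 3^k x$, completing \autoref{lem:greedy-coreset}.
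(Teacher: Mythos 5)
Your opening move is the right one: since the greedy marginal gains are monotone decreasing and greedy preferred $q_k$ over $p$ in round $k$, we have $\dist(p,\ha)\leq \dist(p,\ha_{k-1})\leq \dist(q_k,\ha_{k-1})$, so everything reduces to bounding a greedy gain. But the central claim you then lean on --- that $\dist(q_k,\ha_{k-1})$ can be bounded by $O(x)$ (or by a clean recursion) ``since the $q_i$'s hug $\h$'' --- is exactly the step that does not go through directly, and the paper's proof is structured around circumventing it. Two obstructions: (i) the projections $q_1',\dots,q_{k-1}'$ need not span $\h$, so $q_k'$ need not lie in their span at all; and (ii) even when it does, writing $q_k'=\sum_{i<k}\alpha_i q_i'$ can require arbitrarily large coefficients if the earlier projections are nearly dependent, and the coefficient bound $\abs{\alpha_i}\leq 1$ that you propose to import from \autoref{clm:lcs-maxvol} is a consequence of the \emph{max-volume} choice of the independent subset in the local-search analysis --- greedy provides no such property, so that bound simply is not available here.

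What the paper does instead, and what is missing from your sketch, is a pigeonhole-plus-contradiction argument. It suffices to show that \emph{some} gain $\dist(q_{t+1},\ha_t)$ with $t\leq k-1$ is at most $3^kx$ (monotonicity then transfers the bound to step $k$). Assume for contradiction that all gains exceed $3^kx$. The $k$ projections $q_1',\dots,q_k'$ live in the $(k-1)$-dimensional $\h$, so there is a \emph{first} $t$ with $q_{t+1}'\in\ha_t'=\langle q_1',\dots,q_t'\rangle$; at that first dependency the coefficients in $q_{t+1}'=\sum_{i\leq t}\alpha_i q_i'$ \emph{are} controlled ($\abs{\alpha_i}\leq 3^t$), but proving this (\autoref{clm:maxvol}) is itself an induction that repeatedly invokes the contradiction hypothesis: whenever $\dist(q_{t+1}',\ha_t')$ threatens to be small, the hypothesis $\dist(q_{t+1},\ha_t)>3^kx$ is violated, and otherwise the new coefficient is at most $2$, giving the recurrence $3^{t+1}\le 3^t+2\cdot3^t$ that produces the base $3$. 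Your alternative plan of inducting on $k$ with ``$\h$ replaced by an appropriate $(k-2)$-dimensional subspace'' also has no well-defined inductive hypothesis, since the assumption $\dist(q_i,\h)\leq x/k$ is tied to the fixed $\h$ and there is no canonical lower-dimensional replacement. So the proposal identifies the correct entry point but lacks the two ideas (first-dependency pigeonhole among the projections, and the contradiction-powered coefficient induction) that actually close the argument.
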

Clearly, combining them with triangle inequality, we get that for any $p \in \pntset$, $\dist(p,\h) <2\cdot 3^k x$, which implies $h(\pntset,\h)\leq 2\cdot 3^k\cdot x$ and completes the proof.  So it remains to prove the lemmas. Let the  output of the greedy $c(P)$ be $q_1, \dots,q_k$ with this order, i.e. $q_1$ is the first vector selected by the algorithm. 

\begin{proofof}{\autoref{lem:d-p'-A}}
	Recall that $q_1,\dots,q_k$ is the output of greedy. For any $p\in \pntset$ and for any $1\leq t \leq k$, let $\ha_t=\langle q_1,\dots,q_t \rangle$ and  define $p^t$ to be the projection of $p$ onto $\ha_t$. We show the lemma using the following claim.
	\begin{claim}\label{claim:greedycoreset}
		For any $p\in \pntset$ and any $1\leq t \leq k$,
		we can write $p^t=\sum_{i=1}^t \alpha_i q_i$ so that for each $i$, $|\alpha_i|\leq 2^{t-1}$.
	\end{claim}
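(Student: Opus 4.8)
The plan is to prove Claim~\ref{claim:greedycoreset} by induction on $t$, since this claim immediately yields Lemma~\ref{lem:d-p'-A}: once we know $p_\ha = p^k = \sum_{i=1}^k \alpha_i q_i$ with each $|\alpha_i| \le 2^{k-1}$, the triangle inequality gives $\dist(p_\ha,\h) \le \sum_i |\alpha_i| \dist(q_i,\h) \le 2^{k-1} \cdot k \cdot (x/k) = 2^{k-1} x$, where we used the standing assumption $\dist(q_i,\h) \le x/k$ for all $i$. So the real content is the coefficient bound on the projection onto the greedy subspaces $\ha_t$.

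For the base case $t=1$, the projection $p^1$ of $p$ onto the line $\langle q_1 \rangle$ is $\alpha_1 q_1$ with $|\alpha_1| = \dist(p^1,0)/\|q_1\| \le \|p\|/\|q_1\| \le 1$, because greedy picks $q_1$ as the farthest point from the origin (the empty subspace), so $\|q_1\| \ge \|p\|$ for every $p \in P$; hence $|\alpha_1| \le 1 = 2^0$. For the inductive step, I would assume the claim for $t-1$ and analyze $p^t$, the projection onto $\ha_t = \langle q_1,\dots,q_t \rangle$. The natural decomposition is to split $p^t$ into its projection $p^{t-1}$ onto $\ha_{t-1}$ plus a component along the direction of $q_t$ orthogonal to $\ha_{t-1}$. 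The first part is handled by the inductive hypothesis, giving coefficients bounded by $2^{t-2}$ on $q_1,\dots,q_{t-1}$. The new coefficient on $q_t$ comes from the orthogonal component: writing $p^t = p^{t-1} + \beta \,\hat{u}$ where $\hat u$ is the unit vector in $\ha_t \cap \ha_{t-1}^\perp$, I need to express $\hat u$ in the basis $q_1,\dots,q_t$ and fold those coefficients into the $\alpha_i$.

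The key geometric input is the greedy choice rule: at step $t$, greedy selects $q_t$ to maximize $\dist(\cdot,\ha_{t-1})$, so $\dist(q_t,\ha_{t-1}) \ge \dist(p,\ha_{t-1}) \ge \dist(p^t,\ha_{t-1}) = |\beta|$. Since $q_t$'s component orthogonal to $\ha_{t-1}$ has length exactly $\dist(q_t,\ha_{t-1})$, this bounds the coefficient of $q_t$ (relative to the $\hat u$ direction) by $1$ in absolute value. The subtlety — and where the doubling comes from — is that writing $q_t = (\text{component in } \ha_{t-1}) + \dist(q_t,\ha_{t-1})\,\hat u$ means that solving for $\hat u$ in terms of $q_t$ reintroduces a component of $q_t$ inside $\ha_{t-1}$, which must itself be re-expanded in $q_1,\dots,q_{t-1}$ and added to the inductive coefficients. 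I expect this recombination step to be the main obstacle: tracking how the $\ha_{t-1}$-part of $q_t$ contributes to the coefficients $\alpha_1,\dots,\alpha_{t-1}$ and verifying that the worst case is exactly the bound $|\alpha_i| \le 2^{t-2} + 2^{t-2} = 2^{t-1}$, i.e.\ the old coefficient plus a new contribution each bounded by $2^{t-2}$, which is precisely what produces the factor of two per step and hence the $2^{k-1}$ bound at $t=k$. The cleanest way to control the re-expansion is to bound the coefficients of $q_t$'s projection onto $\ha_{t-1}$ in the $q_1,\dots,q_{t-1}$ basis by a maximal-volume argument analogous to Claim~\ref{clm:lcs-maxvol}, giving each such coefficient absolute value at most $1$, so that they scale the inductive coefficients by at most a constant factor and the recursion $a_t \le 2 a_{t-1}$ closes.
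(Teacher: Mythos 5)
Your induction skeleton is the same as the paper's: the base case via $\|q_1\|\geq\|p\|$, the decomposition $p^{t} = \alpha_{t}q_{t} - \alpha_{t}q_{t}^{t-1} + p^{t-1}$ with $|\alpha_{t}|\leq 1$ coming from the greedy rule $\dist(p^{t},\ha_{t-1})\leq\dist(p,\ha_{t-1})\leq\dist(q_{t},\ha_{t-1})$, and the need to re-expand the $\ha_{t-1}$-component $q_{t}^{t-1}$ of $q_{t}$ in the basis $q_1,\dots,q_{t-1}$. But the mechanism you propose for that last step --- a maximal-volume argument ``analogous to Claim~\ref{clm:lcs-maxvol}'' giving re-expansion coefficients of absolute value at most $1$ --- does not work here, and this is exactly the step you flagged as the main obstacle. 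Claim~\ref{clm:lcs-maxvol} needs the spanning set to be volume-maximal (equivalently, a leave-one-out optimality condition: no point outside the set is farther from each hyperplane $\langle S\setminus\{q_i'\}\rangle$ than $q_i'$ itself). The local search stopping condition supplies this; the greedy prefix $\{q_1,\dots,q_{t-1}\}$ does not. Greedy only guarantees $\dist(q_j,\ha_{j-1})\geq\dist(q_t,\ha_{j-1})$, i.e.\ distances to the \emph{nested prefix} subspaces, not to the leave-one-out subspaces, and the coefficients can genuinely exceed $1$: with $q_1=(1,0,0)$, $q_2=(0.8,0.6,0)$ and a later point whose projection onto $\langle q_1,q_2\rangle$ is $(-0.6,0.55,0)$, all consistent with the greedy ordering, the coefficient on $q_1$ is about $-1.33$. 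Your plan is also internally inconsistent: if the re-expansion coefficients were bounded by $1$, the recursion would be $a_t\leq a_{t-1}+1$ (a linear bound), not $a_t\leq 2a_{t-1}$.

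The paper closes the recursion differently, and more simply: because the claim is stated for \emph{every} $p\in P$ and $q_{t}\in P$, the inductive hypothesis applies to $q_{t}$ itself, giving $q_{t}^{t-1}=\sum_{j<t}\gamma_j q_j$ with $|\gamma_j|\leq 2^{t-2}$. Combined with $|\alpha_t|\leq 1$ and $|\beta_j|\leq 2^{t-2}$ for the coefficients of $p^{t-1}$, each new coefficient is $|\beta_j-\alpha_t\gamma_j|\leq 2^{t-2}+2^{t-2}=2^{t-1}$, which is the arithmetic you wrote in your penultimate sentence. So the fix is to drop the maximal-volume step entirely and strengthen what you carry through the induction (the statement for all $p\in P$ simultaneously, so it can be self-applied to $q_t$); with that change your argument becomes the paper's proof.
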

Let us first show how  the above claim implies the lemma. It follows that  we can write $p_{\ha}=p^k=\sum_{i=1}^k \alpha_i q_i$ where all $\abs{\alpha_i} \leq 2^{k-1}$. Now since for each $i\leq k$, $\dist(q_i,\h)\leq x/k$ by assumption, we have that $\dist(p_{\ha},\h)\leq \sum\alpha_i \dist(q_i,\h) \leq 2^{k-1}x$.
Therefore, it suffices to prove the claim.

\begin{proofof}{\autoref{claim:greedycoreset}}
	We use induction on $t$. 
	To prove the base case of induction, i.e., $t=1$, note that $q_1$ is the vector with largest norm in $\pntset$. Thus we have that $\norm{p^1}\leq \norm{q_1}$ and therefore we can write $p^1 = \alpha_1 q_1$ where $\abs{\alpha_1}\leq 1$.  Now, lets assume that the hypothesis holds for the first $t$ points; that is, the projection of any point $p$ onto $\ha_t$ can be written as $\sum_{j\leq t} \alpha_j q_j$ where $\abs{\alpha_j}$'s are at most $2^{t-1}$. 
	
	Now, note that by the definition of the greedy algorithm, $q_{t+1}$ is the point with farthest distance from $\ha_t$. Therefore, for any point $p\in \pntset \setminus \Set{q_1,\cdots,q_{t+1}}$, we know that $\dist(p,\ha_t)\leq \dist(q_{t+1},\ha_t)$, and thus, $\dist(p^{t+1},\ha_t)\leq \dist(q_{t+1},\ha_t)$. Therefore if we define $q_{t+1}^t$ to be the projection of $q_{t+1}$ onto $\ha_t$, we can write 
	\[
	p^{t+1} = \alpha_{t+1}q_{t+1} - \alpha_{t+1} q_{t+1}^{t}+ p^{t} \quad \mbox{ where } \abs{\alpha_{t+1}} \leq 1.
	\]
	By the hypothesis, we can write $p^{t}=\sum_{j\leq t} \beta_j q_j$, and $q_{t+1}^t=\sum_{j\leq t} \gamma_j q_j$, where $\abs{\beta_j}\leq 2^{t-1}$, and $\abs{\gamma_j}\leq 2^{t-1}$. Since $\abs{\alpha_{t+1}}\leq 1$, we can write 
	\[
	p^{t+1} = \alpha_{t+1}q_{t+1} + \sum_{j\leq t} (\beta_j - \alpha_{t+1}\gamma_j) q_j =  \sum_{j\leq t+1} \alpha_j q_j\] 
	where  $\abs{\alpha_{j}} \leq 2^t$. This completes the proof.
\end{proofof}

\end{proofof}
\vspace{-0.2cm}
\begin{proofof}{\autoref{lem:greedy-d-p-p'}}
First, note that for any $t$, we have 
$\dist(q_{t+1}, \ha_{t}) \geq \dist(p,\ha_{k-1}).$
This is because the greedy algorithm 
has chosen $q_k$ over $p$ in its $k$-th round which means that 
$\dist(p,\ha_{k-1})\leq \dist(q_k,\ha_{k-1})$, and by definition of the greedy algorithm for any $i <j$ we have 
$\dist(q_{i+1},\ha_{i})\geq \dist(q_{j+1},\ha_{j})$. So it is enough to prove 
\begin{equation}
\label{eq:goal}
\exists \, 1 \leq t \leq k-1 \text{ s.t. } \dist(q_{t+1},\ha_{t})\leq 
3^{k}x
\end{equation}
For $1\leq i \leq k$, let $q_i'$ be the projection of $q_i$ onto 
$\h$. Recall that, we are assuming 
that for any $i$, $\dist(q_i,q_i')< x/k$. To prove \eqref{eq:goal}, we use proof by contradiction, so suppose that for all $t$, $\dist(q_{t+1},\ha_{t}) > 3^kx$. We also  define $\ha_t'$ to be the projection of $\ha_t$ on $\h$, i.e., $\ha'_t=\langle q'_1,\dots,q'_t \rangle$. 
Given these assumptions, we prove the following claim.	
\begin{claim}\label{clm:maxvol}
	For any $1 \leq t \leq k-1$, we can write $\Pi(\ha'_t)(q_{t+1}') = \sum_{i\leq t} \alpha_i q_i'$ where $\abs{\alpha_i}\leq 3^t$, where for a point $q$ and a subspace $\mathcal{A}$, $\Pi(\mathcal{A})(q)$ denotes projection of $q$ onto $\mathcal{A}$.
\end{claim}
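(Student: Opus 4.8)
The plan is to prove Claim~\ref{clm:maxvol} by induction on $t$, and in fact to establish the slightly stronger statement that for every $0\le t\le k-1$ and every index $j>t$, the projection $\Pi(\ha'_t)(q_j')$ can be written as $\sum_{i\le t}\alpha_i q_i'$ with each $\abs{\alpha_i}\le 3^t$. The claim is the special case $j=t+1$, and the strengthening to all $j>t$ is what lets the induction close, since the step at level $t$ will invoke the hypothesis at two different indices. The base case $t=0$ is vacuous, since $\Pi(\ha'_0)(q_j')=0$ is the empty sum. This parallels the inductive proof of Claim~\ref{claim:greedycoreset}, but now carried out for the \emph{projected} points $q_i'$ and the projected subspaces $\ha'_t$ instead of the originals.

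For the step $t\to t+1$, I would use the orthogonal decomposition $\ha'_{t+1}=\ha'_t\oplus\langle u_{t+1}\rangle$, where $u_{t+1}=q_{t+1}'-\Pi(\ha'_t)(q_{t+1}')$ is the component of $q_{t+1}'$ orthogonal to $\ha'_t$ (nonzero, as the denominator bound below will show). Splitting the projection of $q_j'$ along this decomposition gives
\[
\Pi(\ha'_{t+1})(q_j') = \Pi(\ha'_t)(q_j') + \gamma\, u_{t+1},\qquad \abs{\gamma}\le \frac{\dist(q_j',\ha'_t)}{\dist(q_{t+1}',\ha'_t)} .
\]
Substituting the inductive expansions of $\Pi(\ha'_t)(q_j')$ and of $\Pi(\ha'_t)(q_{t+1}')$ (both available, since $j,t+1>t$) rewrites the right-hand side as $\gamma\,q_{t+1}' + \sum_{i\le t}(\beta_i-\gamma\delta_i)q_i'$, whose coefficients are bounded by $\abs{\gamma}$ and by $3^t(1+\abs{\gamma})$. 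Hence the whole argument reduces to showing $\abs{\gamma}\le 2$: then every coefficient is at most $3^t\cdot 3 = 3^{t+1}$, which is precisely where the base $3$ in $3^t$ comes from (versus the $2$ in Claim~\ref{claim:greedycoreset}).

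Bounding $\abs{\gamma}\le 2$ is the crux, and it is where both standing assumptions of the contradiction argument enter. For the numerator, projection onto $\h$ is $1$-Lipschitz and carries $\ha_t$ into $\ha'_t$, so $\dist(q_j',\ha'_t)\le\dist(q_j,\ha_t)$, and the greedy rule gives $\dist(q_j,\ha_t)\le\dist(q_{t+1},\ha_t)$ for every $j>t$. For the denominator I would transport the nearest point of $\ha'_t$ back into $\ha_t$: writing $\Pi(\ha'_t)(q_{t+1}')=\sum_{i\le t}\delta_i q_i'$ via the inductive hypothesis (so $\abs{\delta_i}\le 3^t$) and using $\dist(q_i,q_i')<x/k$, one obtains $\dist(q_{t+1}',\ha'_t)\ge \dist(q_{t+1},\ha_t)-(1+t3^t)x/k$. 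Combining the two estimates yields $\abs{\gamma}\le \dist(q_{t+1},\ha_t)/\bigl(\dist(q_{t+1},\ha_t)-(1+t3^t)x/k\bigr)$, and the contradiction hypothesis $\dist(q_{t+1},\ha_t)>3^kx$ makes the subtracted error at most a third of the denominator, so $\abs{\gamma}\le 3/2\le 2$.

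The main obstacle is exactly this lower bound on $\dist(q_{t+1}',\ha'_t)$. A priori the projected points $q_i'$ could be far more degenerate (nearly linearly dependent) than the originals, which would make $u_{t+1}$ tiny and blow up $\gamma$. The resolution is that the contradiction hypothesis forces $q_{t+1}$ to be very far from $\ha_t$, and since each $q_i$ lies within $x/k$ of its projection $q_i'$, this distance survives the projection onto $\h$ up to an additive error that is dwarfed by $3^kx$. This is the one step that genuinely couples the geometry of the original greedy points to that of their projections, and it is what the extra factor of $3$ pays for.
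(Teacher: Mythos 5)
Your proof is correct and follows essentially the same route as the paper's: the same strengthened induction over all $j>t$, the same use of the greedy rule (via the $1$-Lipschitz projection onto $\h$) for the numerator, and the same use of the contradiction hypothesis $\dist(q_{t+1},\ha_t)>3^kx$ to keep the denominator within a constant factor of $\dist(q_{t+1},\ha_t)$, yielding the identical factor-of-$3$ recursion on the coefficients. The only differences are cosmetic: you phrase the inductive step via the orthogonal decomposition $\ha'_{t+1}=\ha'_t\oplus\langle u_{t+1}\rangle$ where the paper writes the same identity with an explicit coefficient $\alpha_{t+1}\le 2$, and you fold the paper's explicit $t=1$ base case into a vacuous $t=0$ case handled by the general step.
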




\begin{proof}
Intuitively, this is similar to  \autoref{claim:greedycoreset}.
However, instead of looking at the execution of the algorithm on the points $q_1,\cdots, q_k$, we look at the execution of the algorithm on the projected points $q_1',\cdots,q_k'$. Since all of these $k$ points are relatively close to the hyperplane $\h$, the distances are not distorted by much and therefore, we can get approximately the same bounds. 
	 Formally, we prove the claim by induction on $t$, and show that for any $j$ s.t. $j > t$, the point $\Pi(\ha_t')(q_j')$ can be written as the sum $\sum_{i\leq t} \alpha_i q_i'$ such that $\abs{\alpha_i}\leq 3^t$.
	
	\smallskip\noindent\textbf{Base Case.} First, we prove the base case of induction, i.e., $t=1$.  Recall that by our assumption, 
	$\norm{q_1}> 3^k x$, and thus by triangle inequality,
	we have that $\norm{q_1'}\geq \norm{q_1}-x/k \geq 3^k x - x/k\geq 2x$.
	Therefore, since $q_1$ is the vector with largest norm in $\pntset$, using triangle inequality again, we have that for any $j>1$,
	\[
	\norm{q_j'}\leq \norm{q_j} \leq \norm{q_1} \leq \norm{q_1'}+x/k \leq (1+\frac{1}{2k})\norm{q_1'}
	\]
	 Therefore we can write $\Pi(\ha_1')(q_j') = \alpha_1 q_1'$ where $\abs{\alpha_1}\leq 2$. 
	
	\smallskip\noindent\textbf{Inductive step.} Now, lets assume that the hypothesis holds for $\ha_{t}'$. In particular this means that we can write $\Pi(\ha_t')(q_{t+1}') = \sum_{i\leq t}\beta_i q_i'$ where $\abs{\beta_i}\leq 3^t$, and that for a given $j>t+1$, we can write $\Pi(\ha_t')(q_{j}') = \sum_{i\leq t} \gamma_i q_i'$ where $\abs{\gamma_i}$'s are at most $3^{t}$.
	Now let $\ell = \dist(q_{t+1}',\ha_t')$. By triangle inequality, we get that
	\begin{align}\label{eq:greedy-triangle}
	\dist(q_{t+1},\ha_t) &\leq \dist(q_{t+1},q_{t+1}') \\
	&+ \dist(q_{t+1}',\Pi(\ha_{t}')(q_{t+1}') + \\
	&\dist(\Pi(\ha_{t}')(q_{t+1}'),\ha_t) \nonumber \\
	&\leq x/k + \ell+ \dist(\sum_{i\leq t} \beta_i q_i' , \sum_{i\leq t} \beta_i q_i) \nonumber \\
	&\leq  x/k + \ell+ \sum_{i\leq t} \abs{\beta_i} x/k \nonumber \\
	&\leq \ell + 3^t x.
	\end{align}
	Now we consider two case. If $\ell \leq 3^tx$ then 
	using the above 
	\[
	\dist(q_{t+1},\ha_t) \leq 2 \cdot 3^tx \leq 3^k x,
	\]
	which contradicts our assumption of $\dist(q_{t+1},\ha_t)>3^k x$.
	Otherwise, 
	\[
	\begin{split}
	\dist(\Pi(\ha_{t+1}')(q_j'), \ha_{t}') &\leq \dist(q_j',\ha_t') \leq \dist(q_j,\ha_t) \\
	&\leq \dist(q_{t+1},\ha_t) \leq 2\ell,
	\end{split}
	\] 
	where the last inequality follows from Equation \ref{eq:greedy-triangle} . Therefore, we can write $\Pi(\ha_{t+1}')(q_j') = \alpha_{t+1} q_{t+1}' - \alpha_{t+1} \Pi(\ha_{t}')(q_{t+1}') + \Pi(\ha_{t})(q_j')$ where $\alpha_{t+1}\leq 2$.
	
	By the hypothesis, we can write $\Pi(\ha_t')(q_j')=\sum_{i\leq t} \gamma_i q_i'$, where $\abs{\gamma_i}\leq 3^{t}$. Since $\abs{\alpha_{t+1}}\leq 2$, we can write 
	\[
	\begin{split}
	\Pi(\ha'_{t+1})(q_j') &= \alpha_{t+1}q_{t+1}' + \sum_{i\leq t} (\gamma_i - \alpha_{t+1}\beta_i) q_i' \\
	&=  \sum_{i\leq t+1} \alpha_i q_i' \quad \mbox{ where } \abs{\alpha_{i}} \leq 3^{t+1}.
	\end{split}
	\]
	This completes the proof of the claim.
\end{proof}
To finish the proof of the lemma, let us show how it follows from the claim. First, note that $q_1',\dots,q_k'$ are $k$ points in the $(k-1)$-dimensional space $\h$, so for some $t$, $q'_{t+1}$ should lie inside $\ha'_t$ and we have $\Pi(\ha'_t)(q'_{t+1}) = q'_{t+1}$. Fix such $t$. Define the point $q_\alpha = \sum_{i\leq t} \alpha_i q_i$ where $\abs{\alpha_i}\leq 3^k$ are taken from the above claim which means $q'_{t+1}=\sum_{i\leq t} \alpha_i q'_i.$
Note that by definition $q'_{t+1} = \Pi(\h)(q_{\alpha})$.
Therefore, 
\begin{align}\label{eq:qkq}
\dist(q_{t+1}',q_\alpha) &=\dist(q_{\alpha},\h) \\ &\leq \sum_{i\leq t} \alpha_i \dist(q_i,\h) \leq 3^k t\cdot  x/k.
\end{align}
Then we get that
\[
\begin{split}
 \dist(q_{t+1},\ha_{t}) & \leq \dist(q_{t+1},q_\alpha) \quad \explain{as $q_{\alpha} \in \ha_{t}$} \\
& \leq \dist(q_{t+1},q_{t+1}') + \dist(q_{t+1}',q_\alpha) 
\\
& \leq x/k + 3^kt\cdot  x/k \leq 3^k x  
\end{split}
\]
where the second inequality holds because of triangle inequality and the last one from \eqref{eq:qkq} and the fact that $t\leq k-1$. This contradicts our assumption that $\dist(q_{t+1},\ha_t) >3^k x$, and proves the lemma.
\end{proofof}

\section{Experiments}\label{sec:experiments}
In this section, we evaluate the effectiveness of our proposed Local Search algorithm empirically on real data sets. We implement the following three algorithms.

\begin{itemize}
\vspace{-0.2cm}
\item{The Greedy algorithm of Section \ref{sec:greedy}} (GD).
\vspace{-0.2cm}
\item{The Local Search algorithm of Section \ref{sec:local-search} with accuracy parameter $\eps = 10^{-5}$} (LS).
\vspace{-0.2cm}
\item{The LP-based algorithm of \cite{indyk2018composable} which has almost tight approximation guarantee theoretically (LP). Note that this algorithm might pick up to $O(k\log k)$ points in the core-set.}
\end{itemize}

\paragraph{Data sets.}
We use two data sets that were also used in \cite{li2015efficient} in the context of approximating DPPs over large data sets.
\begin{itemize}
\vspace{-0.3cm}
\item MNIST \cite{lecun1998gradient}: contains a set of $60000$ images of hand-written digits, where each image is of size $28$ by $28$. 
\vspace{-0.1cm}
\item
GENES \cite{batmanghelich2014diversifying}: contains a set of $10000$ genes, where each entry is a feature vector of a gene. The features correspond to shortest path distances of $330$ different hubs in the BioGRID gene interaction network. This data set was initially used to identify a diverse set of genes to predict a tumor.  Here, we slightly modify it and remove genes that have an unknown value at any coordinate which gives us a data set of size $\sim 8000$.
\end{itemize}
\vspace{-0.3cm}
Moreover, we apply an RBF kernel on both of these data sets using $\sigma=6$ for MNIST and $\sigma=10$ for GENES. These are the same values used in the work of \cite{li2015efficient}.

\subsection{Experiment Setup.}
We partition the data sets uniformly at random into multiple data sets $P_1,\cdots, P_m$. We use  $m=10$ for the smaller GENES data set, and for the larger MNIST data set we use $m=50$ and also we use $m=10$ (equal to the number of digits in the data set).  Moreover, since the partitions are random, we repeat every experiment $10$ times and take the average in our reported results.

We then use a \emph{core-set construction algorithm} $\alg_c$ to compute core-sets of size $k$, i.e., $S_1=\alg_c(P_1,k),$$\cdots, $$S_m= \alg_c(P_m,k)$, for $\alg_c\in \{\mbox{GD, LS, LP}\}$.  Recall that GD, LS and LP  correspond to the Greedy, Local Search and LP-based algorithm of \cite{indyk2018composable} respectively.

Finally, we take the union of these core-sets $U_{\alg_c} = S_1\cup\cdots\cup S_m$ and  compute the solutions for $U_{\alg_c}$. Since computing the optimal solution can take exponential time ($\sim n^k$), we will instead use an \emph{aggregation algorithm} $\alg_a$ (either GD, LS or LP).  We will use the notation $\alg_a/\alg_c$ to refer to the constructed set of $k$ points, returned by $\alg_a(U_{\alg_c},k)$. For example, GD/LS refers to the set of $k$ points returned by the Greedy algorithm on the union of the core-sets, where each core-set is produced using the Local Search algorithm.


Finally, we vary the value of $k$ from $3$ to $20$.
\subsection{Results}
\paragraph{Local Search vs. Greedy as offline algorithms.} 
Our first set of experiments simply compares the quality of Greedy and Local Search as  centralized algorithms 
on whole data sets. We perform this experiment to measure the improvement of Local Search over Greedy in the offline setting. Intuitively, this improvement upper bounds the improvement one can expect in the core-set setting.
Figure \ref{fig:lsgd-whole-ratio} shows the improvement ratio of the determinant of the solution returned by the Local Search algorithm over the determinant of the solution returned by the Greedy algorithm. On average over all values of $k$, Local Search improves over Greedy by $13\%$ for GENES data set and $5\%$ for MNIST data set. Figure \ref{fig:lsgd-whole-timeratio} shows the ratio of the time it takes to run the Local Search and Greedy algorithms as a function of $k$ for both data sets. On average, it takes about $6.5$ times more to run the Local Search algorithm.

\begin{figure}[!h]
\minipage{0.45\textwidth}
  \includegraphics[width=\linewidth]{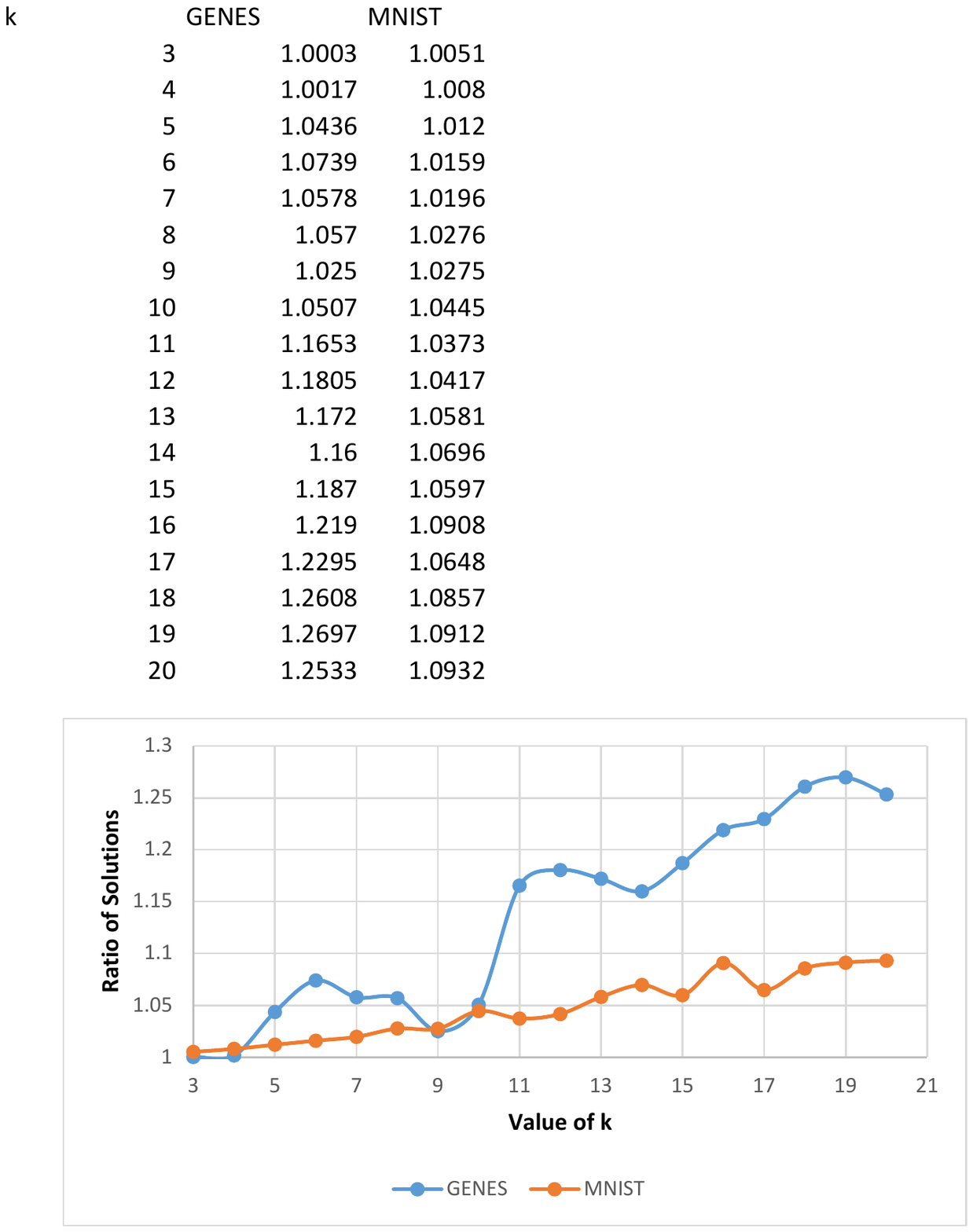}
\caption{Average improvement of Local Search over Greedy as a function of $k$.}
\label{fig:lsgd-whole-ratio}
\endminipage\hfill
\minipage{0.45\textwidth}
 \includegraphics[width=\linewidth]{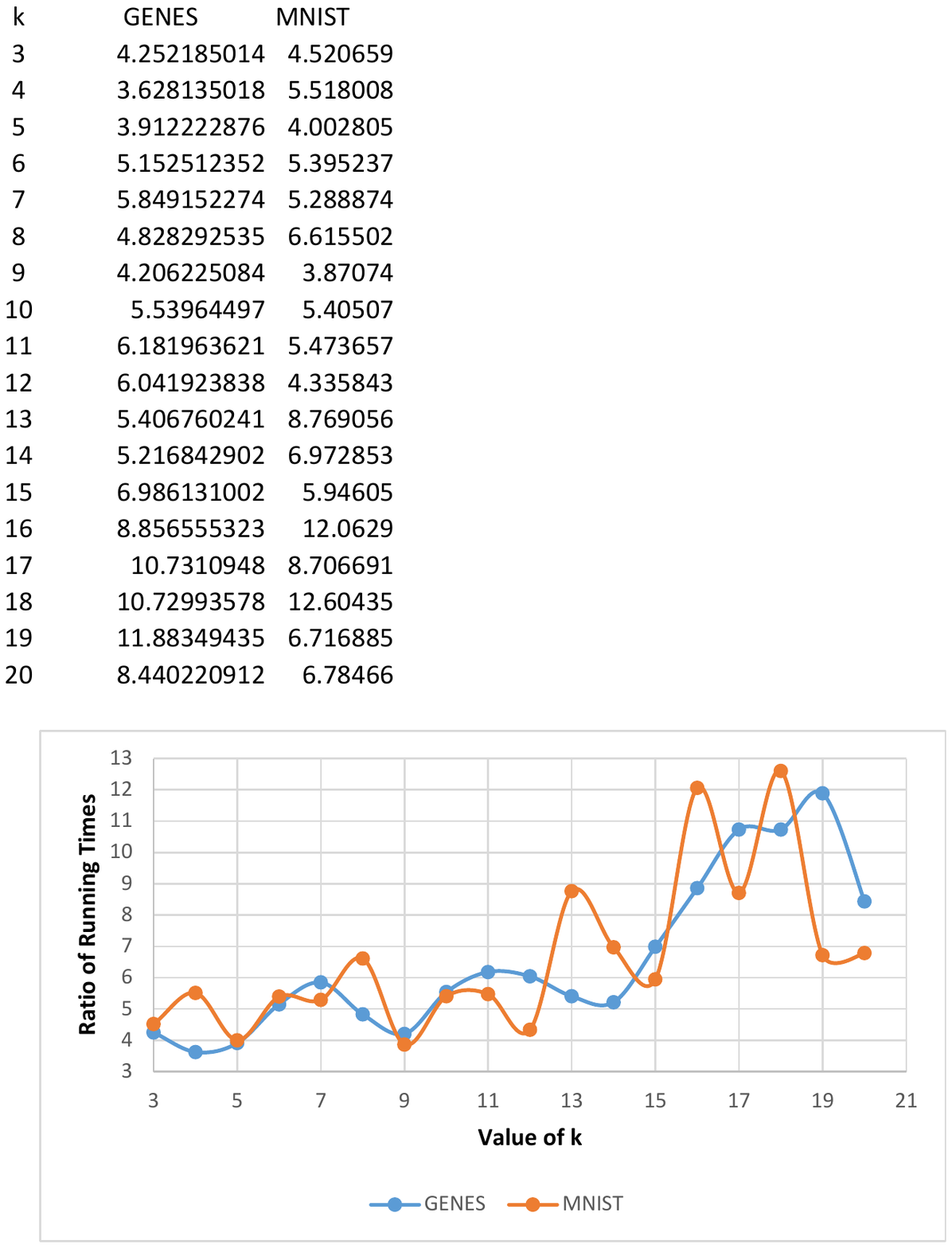}
 \caption{Average ratio of the run time of Local Search over Greedy as a function of $k$.}
 \label{fig:lsgd-whole-timeratio}
\endminipage
\end{figure}

\paragraph{Local Search vs. Greedy as core-sets.}
In our second experiment, we use  Greedy algorithm for aggregation, i.e., $\alg_a=GD$, and compare GD/LS with GD/GD.
Figure \ref{fig:lsgd-coreset-ratio} shows the improvement of local search over greedy as a core-set construction algorithm. The graph is drawn as a function of $k$, and for each $k$, the improvement ratio is an average over all 10 runs, and shown for all data sets (including GENES, MNIST with partition number $m=10$, and MNIST with $m=50$).

On average this improvement is $9.6\%$, $2.5\%$ and $1.9\%$ for GENES, MNIST10 and MNIST50 respectively. Moreover, in $87\%$ of all 180 runs of this experiment, Local Search performed better than Greedy, and for some instances, this improvement was up to $58\%$.
Finally, this improvement comes at a cost of increased running time. Figure \ref{fig:lsgd-coreset-timeratio} shows average ratio of the time to construct core-sets using Local Search vs. Greedy.

\begin{figure}[!h]
\minipage{0.45\textwidth}
  \includegraphics[width=\linewidth]{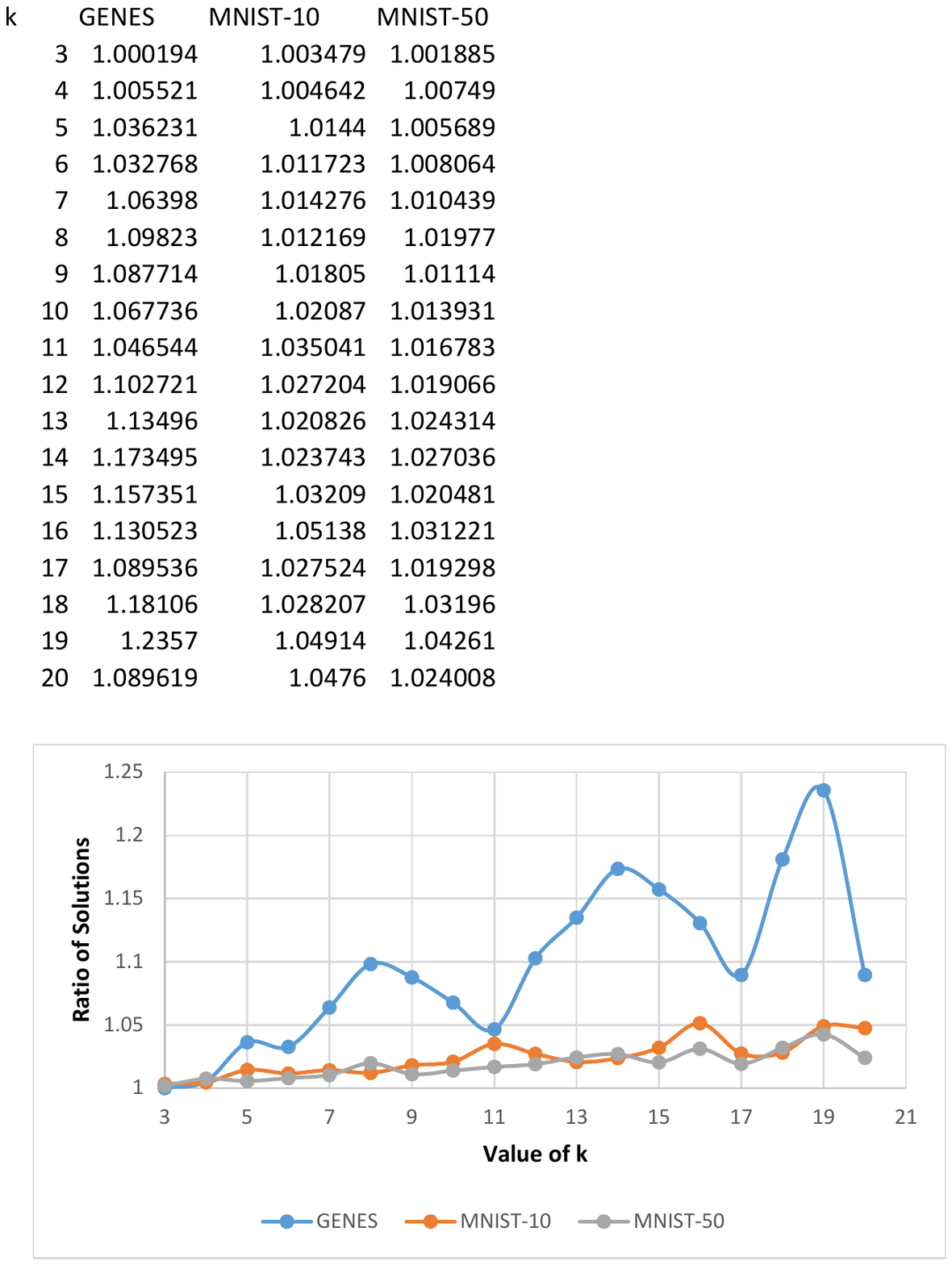}
\caption{Average improvement of Local Search core-set over Greedy core-set as a function of $k$.}
\label{fig:lsgd-coreset-ratio}
\endminipage\hfill
\minipage{0.45\textwidth}
 \includegraphics[width=\linewidth]{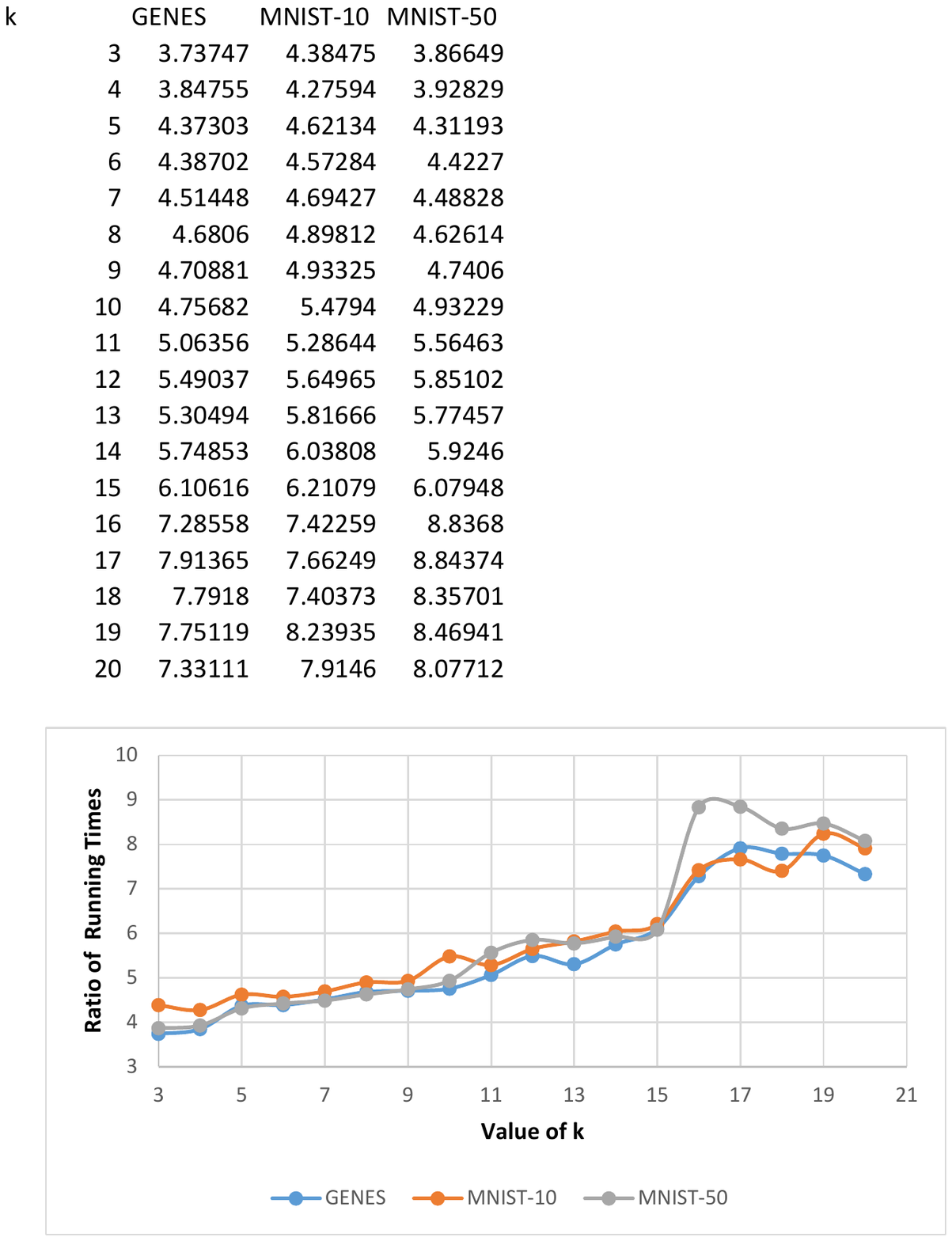}
\caption{Average ratio of the run time of Local Search over Greedy as a function of $k$.}
\label{fig:lsgd-coreset-timeratio}
\endminipage
\end{figure}

\paragraph{Local Search vs. Greedy - identical algorithms.}
We also consider the setting where the core-set construction algorithm is the same as the aggregation algorithm.
This mimics the approach of \cite{mirzasoleiman2013distributed},
who proposed to use the greedy algorithm on each machine to achieve a small solution; then each machine sends this solution to a single machine that further runs the greedy algorithm on the union of these solutions and reports the result.

In this paper show that if instead of Greedy, we use Local Search in \emph{both steps}, the solution will improve significantly. Using our notation, here we are comparing LS/LS vs. GD/GD.
Figure \ref{fig:lsgd-mr-ratio} shows the improvement 
as a function of $k$, taken average over all 10 runs.

On average the improvement is $23\%$, $5.5\%$ and $6.0\%$ for GENES, MNIST10 and MNIST50 respectively. Moreover, in only 1 out of 180 runs the Greedy perfomed better than Local Search. The improvement could go as high as $67.7\%$.

\begin{figure}[!h]
\centerline{\includegraphics[width=0.45\textwidth]{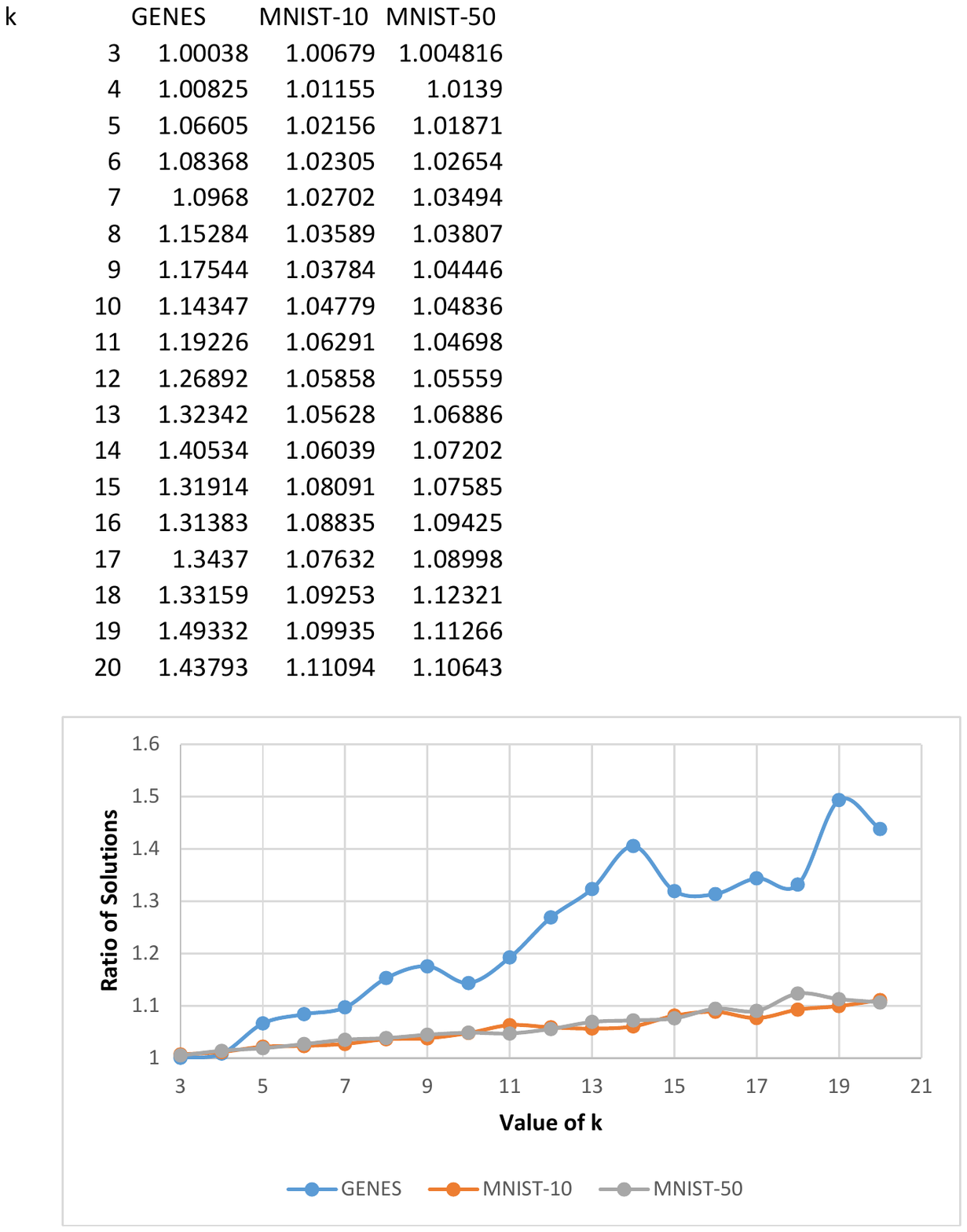}}%
\caption{Average improvement of Local Search over Greedy as a function of $k$, in the identical algorithms setting.}
\label{fig:lsgd-mr-ratio}
\end{figure}

\paragraph{Comparing Local Search vs. the LP-based algorithm.} 
In this section, we compare the performance of the Local Search algorithm and the LP-based algorithm of \cite{indyk2018composable} for constructing core-sets, i.e., we compare GD/LS with GD/LP.
Figure \ref{fig:lpls-coreset-ratio} shows how much Local Search improves over the LP-based algorithm.
On average this improvement is $7.3\%$, $1.8\%$ and $1.4\%$ for GENES, MNIST10 and MNIST50 respectively. Moreover, in $78\%$ of all runs, Local Search performed better than Lp-based algorithm, and this improvement can go upto $63\%$.
Figure \ref{fig:lpls-coreset-timeratio} shows the average ratio of the time to construct core-sets using the LP-based algorithm vs. Local Search. 
As it is clear from the graphs, our proposed Local Search algorithm performs better than even the LP-based algorithm which has almost tight approximation guarantees: while picking fewer points in the core-set, in most cases it finds a better solution and runs faster.

\begin{figure}[!h]
\minipage{0.45\textwidth}
  \includegraphics[width=\linewidth]{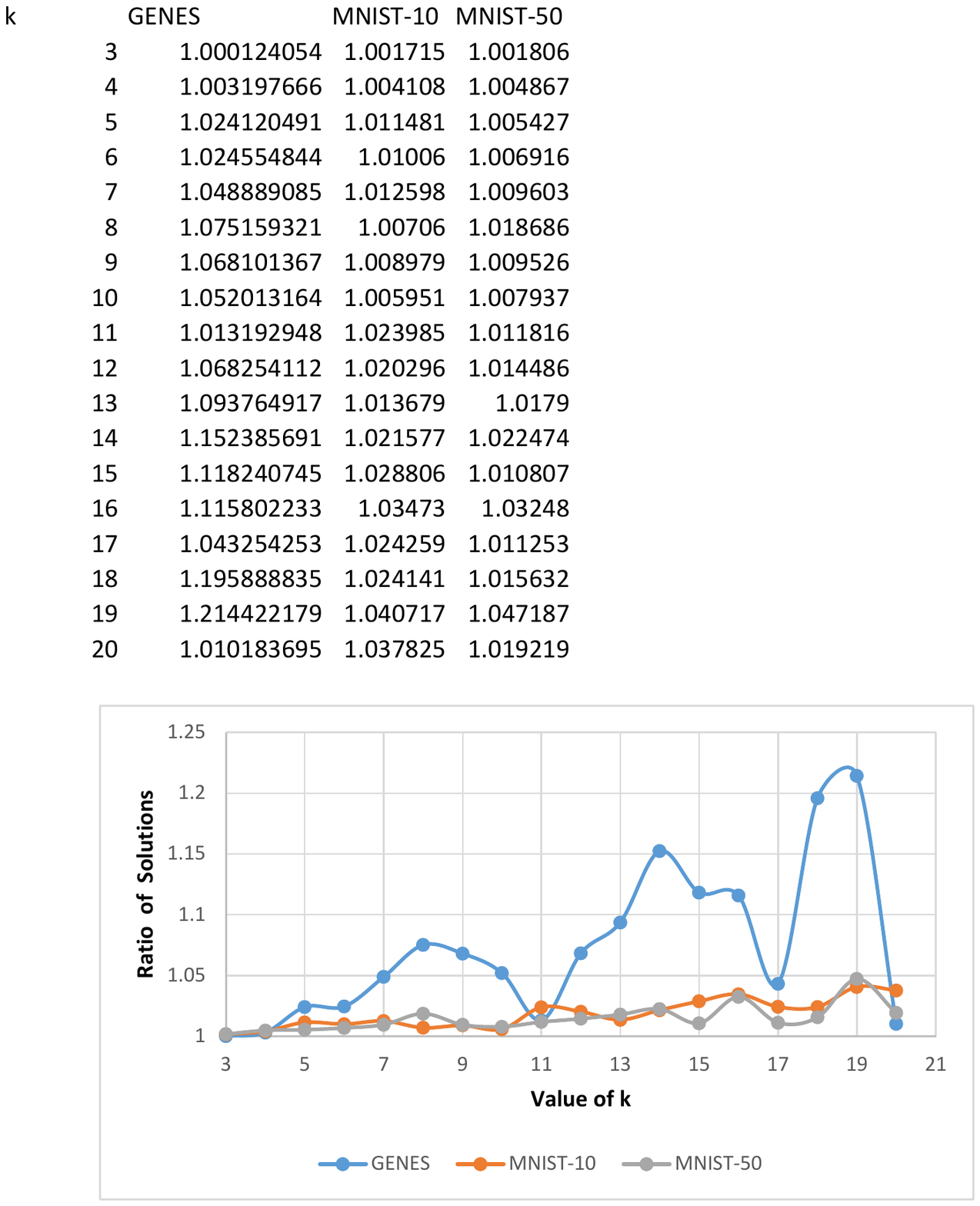}
\caption{Average improvement of Local Search over LP-based algorithm for constructing core-sets as a function of $k$.}
\label{fig:lpls-coreset-ratio}
\endminipage\hfill
\minipage{0.45\textwidth}
 \includegraphics[width=\linewidth]{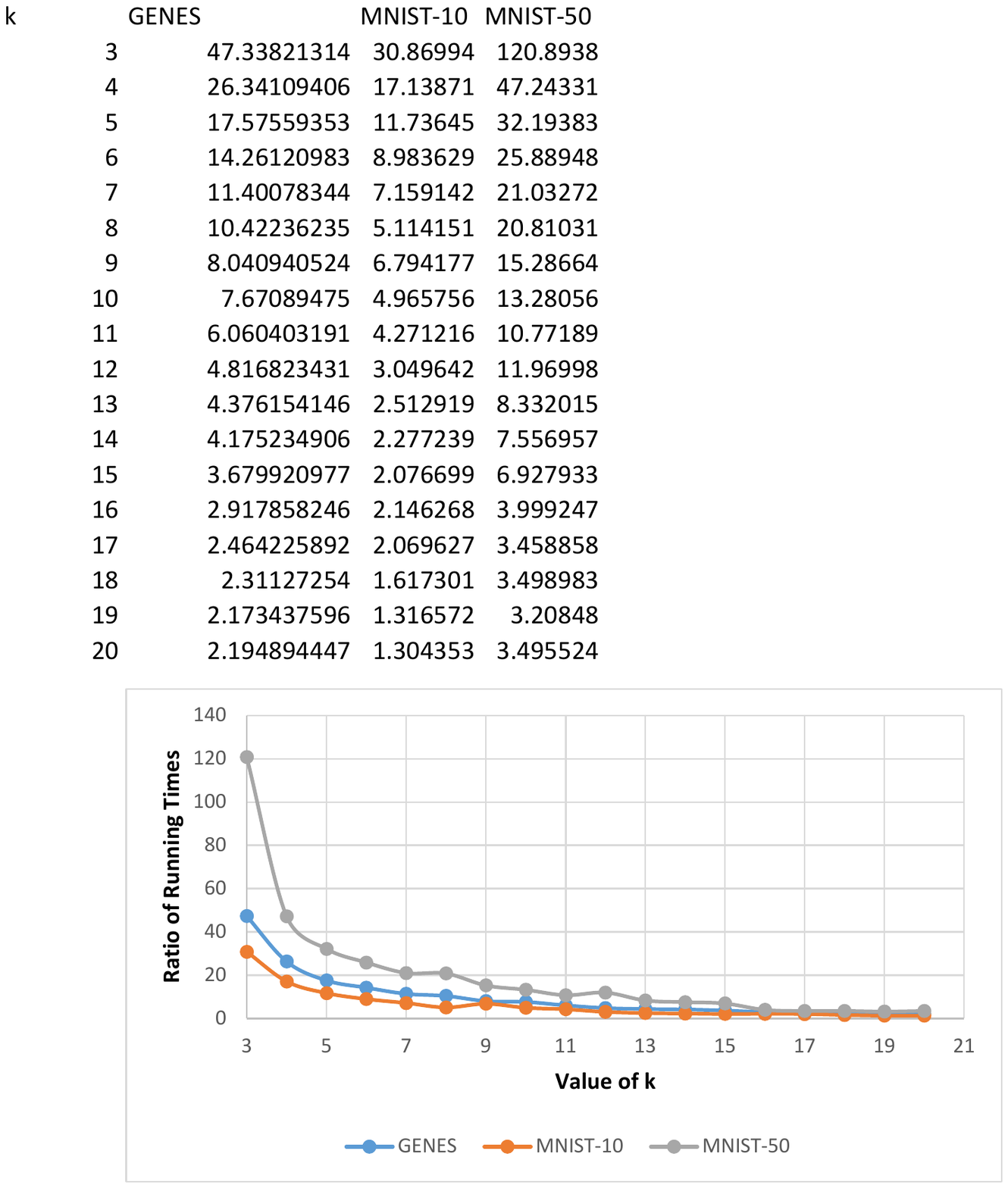}
\caption{Average ratio of the run time of the optimal algorithm over local search as a function of $k$.}
\label{fig:lpls-coreset-timeratio}
\endminipage
\end{figure}

\section{Conclusion}\label{sec:conclusion}
In this work, we proposed to use the Local Search algorithm to construct composable core-sets for the determinant maximization problem. From theoretical perspective, we showed that it achieves a near-optimal approximation guarantee. 
	We further analyzed its performance on large real data sets, and showed that most of the times, Local Search performs better than both the almost optimal approximation algorithm, and the widely-used Greedy algorithm. Generally, for larger values of $k$, the percentage of this improvement has an increasing pattern, however, the amount of this improvement depends on the data set. 
We also note that here, we used the naive implementation of the Local Search algorithm: one could tune the value of $\eps$ to further improve the quality of the solution. Finally, we provided a doubly exponential guarantee for the Greedy algorithm, however,  our experiments suggest that this bound might be open to improvement.


\section*{Acknowledgments}
The authors would like to thank Stefanie Jegelka, Chengtao Li, and Suvrit Sra for providing their data sets and source code of experiments from \cite{li2015efficient}.

Piotr Indyk was supported by NSF TRIPODS award No. 1740751 and Simons Investigator Award.
Shayan Oveis Gharan and Alireza Rezaei were supported by the NSF grant CCF-1552097 and ONR-YIP grant N00014-17-1-2429.

\bibliographystyle{alpha}
\bibliography{biblio}

\newcommand{\etalchar}[1]{$^{#1}$}
\begin{thebibliography}{KMVV15}

\bibitem[AHPV04]{AHV-aemp-04}
Pankaj~K Agarwal, Sariel Har-Peled, and Kasturi~R Varadarajan.
\newblock Approximating extent measures of points.
\newblock {\em Journal of the ACM (JACM)}, 51(4):606--635, 2004.

\bibitem[AHPV05]{AHV-gavc-05}
Pankaj~K Agarwal, Sariel Har-Peled, and Kasturi~R Varadarajan.
\newblock Geometric approximation via coresets.
\newblock {\em Combinatorial and computational geometry}, 52:1--30, 2005.

\bibitem[AK17]{assadi2017randomized}
Sepehr Assadi and Sanjeev Khanna.
\newblock Randomized composable coresets for matching and vertex cover.
\newblock {\em arXiv preprint arXiv:1705.08242}, 2017.

\bibitem[BENW15]{barbosa2015power}
Rafael Barbosa, Alina Ene, Huy Nguyen, and Justin Ward.
\newblock The power of randomization: Distributed submodular maximization on
  massive datasets.
\newblock In {\em International Conference on Machine Learning}, pages
  1236--1244, 2015.

\bibitem[BGMS16]{bhaskara2016linear}
Aditya Bhaskara, Mehrdad Ghadiri, Vahab Mirrokni, and Ola Svensson.
\newblock Linear relaxations for finding diverse elements in metric spaces.
\newblock In {\em Advances in Neural Information Processing Systems}, pages
  4098--4106, 2016.

\bibitem[BLY12]{borodin2012max}
Allan Borodin, Hyun~Chul Lee, and Yuli Ye.
\newblock Max-sum diversification, monotone submodular functions and dynamic
  updates.
\newblock In {\em Proceedings of the 31st ACM SIGMOD-SIGACT-SIGAI symposium on
  Principles of Database Systems}, pages 155--166. ACM, 2012.

\bibitem[BMKK14]{badanidiyuru2014streaming}
Ashwinkumar Badanidiyuru, Baharan Mirzasoleiman, Amin Karbasi, and Andreas
  Krause.
\newblock Streaming submodular maximization: Massive data summarization on the
  fly.
\newblock In {\em Proceedings of the 20th ACM SIGKDD international conference
  on Knowledge discovery and data mining}, pages 671--680. ACM, 2014.

\bibitem[BQK{\etalchar{+}}14]{batmanghelich2014diversifying}
Nematollah~Kayhan Batmanghelich, Gerald Quon, Alex Kulesza, Manolis Kellis,
  Polina Golland, and Luke Bornn.
\newblock Diversifying sparsity using variational determinantal point
  processes.
\newblock {\em arXiv preprint arXiv:1411.6307}, 2014.

\bibitem[CDKV16]{celis2016fair}
L~Elisa Celis, Amit Deshpande, Tarun Kathuria, and Nisheeth~K Vishnoi.
\newblock How to be fair and diverse?
\newblock {\em arXiv preprint arXiv:1610.07183}, 2016.

\bibitem[CGGS15]{chao2015large}
Wei-Lun Chao, Boqing Gong, Kristen Grauman, and Fei Sha.
\newblock Large-margin determinantal point processes.
\newblock In {\em UAI}, pages 191--200, 2015.

\bibitem[CGQ15]{chekuri2015streaming}
Chandra Chekuri, Shalmoli Gupta, and Kent Quanrud.
\newblock Streaming algorithms for submodular function maximization.
\newblock In {\em International Colloquium on Automata, Languages, and
  Programming}, pages 318--330. Springer, 2015.

\bibitem[CKT10]{chierichetti2010max}
Flavio Chierichetti, Ravi Kumar, and Andrew Tomkins.
\newblock Max-cover in map-reduce.
\newblock In {\em Proceedings of the 19th international conference on World
  wide web}, pages 231--240. ACM, 2010.

\bibitem[{\c{C}}MI09]{cm-smvsm-09}
Ali {\c{C}}ivril and Malik Magdon-Ismail.
\newblock On selecting a maximum volume sub-matrix of a matrix and related
  problems.
\newblock {\em Theoretical Computer Science}, 410(47-49):4801--4811, 2009.

\bibitem[CMI13]{civril2013exponential}
Ali Civril and Malik Magdon-Ismail.
\newblock Exponential inapproximability of selecting a maximum volume
  sub-matrix.
\newblock {\em Algorithmica}, 65(1):159--176, 2013.

\bibitem[CPPU17]{ceccarello2017mapreduce}
Matteo Ceccarello, Andrea Pietracaprina, Geppino Pucci, and Eli Upfal.
\newblock Mapreduce and streaming algorithms for diversity maximization in
  metric spaces of bounded doubling dimension.
\newblock {\em Proceedings of the VLDB Endowment}, 10(5):469--480, 2017.

\bibitem[ESV17]{ebrahimi2017subdeterminant}
Javad~B Ebrahimi, Damian Straszak, and Nisheeth~K Vishnoi.
\newblock Subdeterminant maximization via nonconvex relaxations and
  anti-concentration.
\newblock In {\em Foundations of Computer Science (FOCS), 2017 IEEE 58th Annual
  Symposium on}, pages 1020--1031. Ieee, 2017.

\bibitem[GCGS14]{gong2014diverse}
Boqing Gong, Wei-Lun Chao, Kristen Grauman, and Fei Sha.
\newblock Diverse sequential subset selection for supervised video
  summarization.
\newblock In {\em Advances in Neural Information Processing Systems}, pages
  2069--2077, 2014.

\bibitem[GS09]{gollapudi2009axiomatic}
Sreenivas Gollapudi and Aneesh Sharma.
\newblock An axiomatic approach for result diversification.
\newblock In {\em Proceedings of the 18th international conference on World
  wide web}, pages 381--390. ACM, 2009.

\bibitem[HRT97]{hassin1997approximation}
Refael Hassin, Shlomi Rubinstein, and Arie Tamir.
\newblock Approximation algorithms for maximum dispersion.
\newblock {\em Operations research letters}, 21(3):133--137, 1997.

\bibitem[IMGR18]{indyk2018composable}
Piotr Indyk, Sepideh Mahabadi, Shayan~Oveis Gharan, and Alireza Rezaei.
\newblock Composable core-sets for determinant maximization problems via
  spectral spanners.
\newblock {\em arXiv preprint arXiv:1807.11648}, 2018.

\bibitem[IMMM14]{IMMM-ccdcm-14}
Piotr Indyk, Sepideh Mahabadi, Mohammad Mahdian, and Vahab~S Mirrokni.
\newblock Composable core-sets for diversity and coverage maximization.
\newblock In {\em Proceedings of the 33rd ACM SIGMOD-SIGACT-SIGART symposium on
  Principles of database systems}, pages 100--108. ACM, 2014.

\bibitem[KMVV15]{kumar2015fast}
Ravi Kumar, Benjamin Moseley, Sergei Vassilvitskii, and Andrea Vattani.
\newblock Fast greedy algorithms in mapreduce and streaming.
\newblock {\em ACM Transactions on Parallel Computing (TOPC)}, 2(3):14, 2015.

\bibitem[KT11]{kulesza2011learning}
Alex Kulesza and Ben Taskar.
\newblock Learning determinantal point processes.
\newblock 2011.

\bibitem[KT{\etalchar{+}}12]{kulesza2012determinantal}
Alex Kulesza, Ben Taskar, et~al.
\newblock Determinantal point processes for machine learning.
\newblock {\em Foundations and Trends{\textregistered} in Machine Learning},
  5(2--3):123--286, 2012.

\bibitem[LBBH98]{lecun1998gradient}
Yann LeCun, L{\'e}on Bottou, Yoshua Bengio, and Patrick Haffner.
\newblock Gradient-based learning applied to document recognition.
\newblock {\em Proceedings of the IEEE}, 86(11):2278--2324, 1998.

\bibitem[LCYO16]{lee2016individualness}
Donghoon Lee, Geonho Cha, Ming-Hsuan Yang, and Songhwai Oh.
\newblock Individualness and determinantal point processes for pedestrian
  detection.
\newblock In {\em European Conference on Computer Vision}, pages 330--346.
  Springer, 2016.

\bibitem[LJS15]{li2015efficient}
Chengtao Li, Stefanie Jegelka, and Suvrit Sra.
\newblock Efficient sampling for k-determinantal point processes.
\newblock {\em arXiv preprint arXiv:1509.01618}, 2015.

\bibitem[LMNS09]{lee2009non}
Jon Lee, Vahab~S Mirrokni, Viswanath Nagarajan, and Maxim Sviridenko.
\newblock Non-monotone submodular maximization under matroid and knapsack
  constraints.
\newblock In {\em Proceedings of the forty-first annual ACM symposium on Theory
  of computing}, pages 323--332. ACM, 2009.

\bibitem[LSV10]{lee2010submodular}
Jon Lee, Maxim Sviridenko, and Jan Vondr{\'a}k.
\newblock Submodular maximization over multiple matroids via generalized
  exchange properties.
\newblock {\em Mathematics of Operations Research}, 35(4):795--806, 2010.

\bibitem[MBK16]{mirzasoleiman2016fast}
Baharan Mirzasoleiman, Ashwinkumar Badanidiyuru, and Amin Karbasi.
\newblock Fast constrained submodular maximization: Personalized data
  summarization.
\newblock In {\em ICML}, pages 1358--1367, 2016.

\bibitem[MJK17]{mirzasoleiman2017streaming}
Baharan Mirzasoleiman, Stefanie Jegelka, and Andreas Krause.
\newblock Streaming non-monotone submodular maximization: Personalized video
  summarization on the fly.
\newblock {\em arXiv preprint arXiv:1706.03583}, 2017.

\bibitem[MKBK15]{mirzasoleiman2015distributed}
Baharan Mirzasoleiman, Amin Karbasi, Ashwinkumar Badanidiyuru, and Andreas
  Krause.
\newblock Distributed submodular cover: Succinctly summarizing massive data.
\newblock In {\em Advances in Neural Information Processing Systems}, pages
  2881--2889, 2015.

\bibitem[MKSK13]{mirzasoleiman2013distributed}
Baharan Mirzasoleiman, Amin Karbasi, Rik Sarkar, and Andreas Krause.
\newblock Distributed submodular maximization: Identifying representative
  elements in massive data.
\newblock In {\em Advances in Neural Information Processing Systems}, pages
  2049--2057, 2013.

\bibitem[MKSK16]{mirzasoleiman2016distributed}
Baharan Mirzasoleiman, Amin Karbasi, Rik Sarkar, and Andreas Krause.
\newblock Distributed submodular maximization.
\newblock {\em The Journal of Machine Learning Research}, 17(1):8330--8373,
  2016.

\bibitem[MZ15]{mirrokni2015randomized}
Vahab Mirrokni and Morteza Zadimoghaddam.
\newblock Randomized composable core-sets for distributed submodular
  maximization.
\newblock In {\em Proceedings of the forty-seventh annual ACM symposium on
  Theory of computing}, pages 153--162. ACM, 2015.

\bibitem[Nik15]{nikolov2015randomized}
Aleksandar Nikolov.
\newblock Randomized rounding for the largest simplex problem.
\newblock In {\em Proceedings of the forty-seventh annual ACM symposium on
  Theory of computing}, pages 861--870. ACM, 2015.

\bibitem[NS16]{nikolov2016maximizing}
Aleksandar Nikolov and Mohit Singh.
\newblock Maximizing determinants under partition constraints.
\newblock In {\em Proceedings of the forty-eighth annual ACM symposium on
  Theory of Computing}, pages 192--201. ACM, 2016.

\bibitem[PJG{\etalchar{+}}14]{pan2014parallel}
Xinghao Pan, Stefanie Jegelka, Joseph~E Gonzalez, Joseph~K Bradley, and
  Michael~I Jordan.
\newblock Parallel double greedy submodular maximization.
\newblock In {\em Advances in Neural Information Processing Systems}, pages
  118--126, 2014.

\bibitem[WIB14]{wei2014fast}
Kai Wei, Rishabh Iyer, and Jeff Bilmes.
\newblock Fast multi-stage submodular maximization.
\newblock In {\em International conference on machine learning}, pages
  1494--1502, 2014.

\bibitem[YFZ{\etalchar{+}}16]{yao2016tweet}
Jin-ge Yao, Feifan Fan, Wayne~Xin Zhao, Xiaojun Wan, Edward~Y Chang, and
  Jianguo Xiao.
\newblock Tweet timeline generation with determinantal point processes.
\newblock In {\em AAAI}, pages 3080--3086, 2016.

\bibitem[ZGMZ17]{zadeh2017scalable}
Sepehr~Abbasi Zadeh, Mehrdad Ghadiri, Vahab~S Mirrokni, and Morteza
  Zadimoghaddam.
\newblock Scalable feature selection via distributed diversity maximization.
\newblock In {\em AAAI}, pages 2876--2883, 2017.

\end{thebibliography}


\end{document}